\DeclareMathOperator*{\argmin}{arg\,min}
\newcolumntype{M}[1]{>{\centering\arraybackslash}m{#1}}
\newcolumntype{N}{@{}m{0pt}@{}}
\let\originalleft\left
\let\originalright\right
\renewcommand{\left}{\mathopen{}\mathclose\bgroup\originalleft}
\renewcommand{\right}{\aftergroup\egroup\originalright}
\newtheorem{lemma}{Lemma}
\newcommand{\enc}{U_{\text{enc}}}
\newcommand{\aenc}{A_{\text{enc}}}
\begin{document}
\begin{CJK*}{UTF8}{gbsn}

\title{Qubit-oscillator concatenated codes: decoding formalism \& code comparison}

\author{Yijia Xu}
\email[E-mail: ]{yijia@umd.edu}

\affiliation {Joint Center for Quantum Information and Computer Science, NIST and University of Maryland, College Park, Maryland 20742, USA}
\affiliation {Joint Quantum Institute, NIST and University of Maryland, College Park, Maryland 20742, USA}
\affiliation {Department of Physics, University of Maryland, College Park, Maryland 20742, USA}
\affiliation {Institute for Physical Science and Technology, University of Maryland, College Park, Maryland 20742, USA}

\author{Yixu Wang}
\email[E-mail: ]{wangyixu@terpmail.umd.edu}

\affiliation {Department of Physics, University of Maryland, College Park, Maryland 20742, USA}
\affiliation{Maryland Center for Fundamental Physics, University of Maryland,
College Park, MD 20742, USA}

\author{En-Jui Kuo}

\affiliation {Joint Center for Quantum Information and Computer Science, NIST and University of Maryland, College Park, Maryland 20742, USA}
\affiliation {Joint Quantum Institute, NIST and University of Maryland, College Park, Maryland 20742, USA}
\affiliation {Department of Physics, University of Maryland, College Park, Maryland 20742, USA}

\author{Victor V. Albert}
\affiliation {Joint Center for Quantum Information and Computer Science, NIST and University of Maryland, College Park, Maryland 20742, USA}
\affiliation {Department of Physics, University of Maryland, College Park, Maryland 20742, USA}

\date{\today}

\begin{abstract}

Concatenating bosonic error-correcting codes with qubit codes can substantially boost the error-correcting power of the original qubit codes. 
It is not clear how to concatenate optimally, given there are several bosonic codes and concatenation schemes to choose from, including the recently discovered GKP-stabilizer codes [\textcolor{black}{Phys. Rev. Lett. 125, 080503 (2020)}] that allow protection of a logical bosonic mode from fluctuations of the mode's conjugate variables. 
We develop efficient maximum-likelihood decoders for and analyze the performance of three different concatenations of codes taken from the following set: qubit stabilizer codes, analog/Gaussian stabilizer codes, GKP codes, and GKP-stabilizer codes.
We benchmark decoder performance against additive Gaussian white noise, corroborating our numerics with analytical calculations.
We observe that the concatenation involving GKP-stabilizer codes outperforms the more conventional concatenation of a qubit stabilizer code with a GKP code in some cases.
We also propose a GKP-stabilizer code that suppresses fluctuations in both conjugate variables without extra quadrature squeezing, and formulate qudit versions of GKP-stabilizer codes.
\end{abstract}
\maketitle
\end{CJK*}

\begin{figure*}[t]
    \raggedright
    \includegraphics[width=\textwidth]{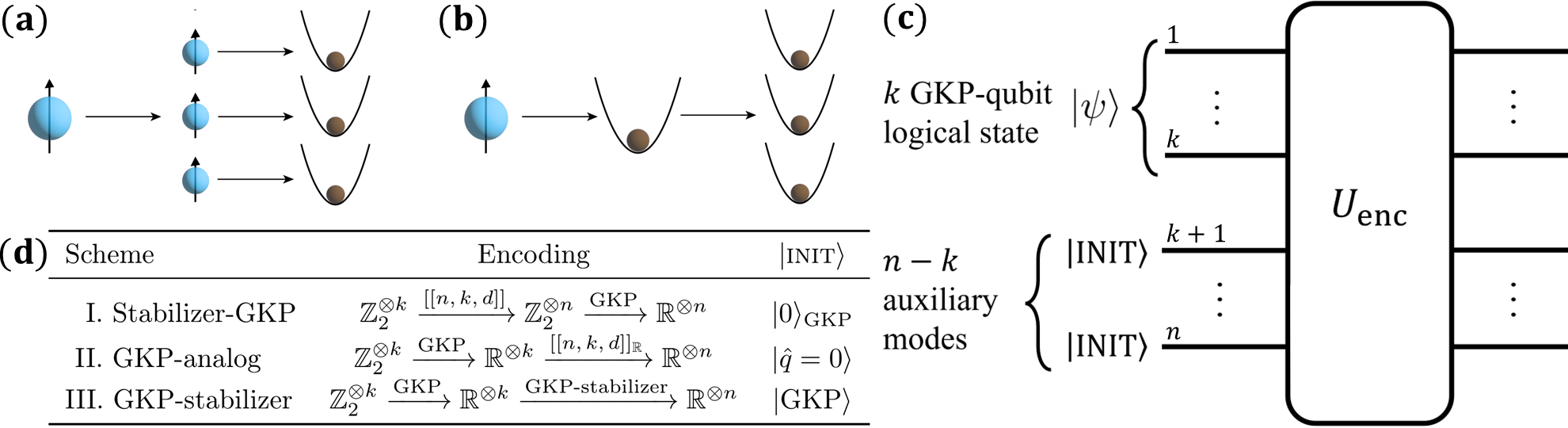}
    \caption{
    \textsc{Encoder summary:}
    {\bf (a)} Sketch of the conventional discrete-variable (DV) and continuous-variable (CV) concatenated encoding class ``DV-DV-CV'', where logical qubits are encoded into an outer multi-qubit code, and each qubit of the outer code is further encoded into a single physical mode. 
    {\bf (b)} Alternative concatenated encoding class ``DV-CV-CV'', where each logical qubit is encoded into an outer single-mode code, and the modes are further encoded into a multi-mode CV code. 
    {\bf (c)} Encoding maps for our concatenated codes can be formulated as Gaussian operations \(\enc\) acting on \(k\) ``logical'' modes encoded in GKP states and \(n-k\) modes in a fixed initial state \(|\textsc{init}\rangle\).
    {\bf (d)} Table of the concatenated encodings considered in this work.
    The ``Stabilizer-GKP'' DV-DV-CV encoding scheme I is the conventional concatenation of a qubit \(\llbracket n,k,d\rrbracket \) stabilizer outer code \cite{gottesman1997stabilizer,calderbank97} with a GKP inner code \cite{gkp}. 
    The ``GKP-analog'' DV-CV-CV encoding scheme II is a concatenation of a single-mode GKP outer code with an analog \(\llbracket n,k,d\rrbracket_{\mathbb{R}}\) stabilizer code \cite{braunstein98,lloyd98,Gu_2009,eczoo_analog_stabilizer}. 
    The ``GKP-stabilizer'' DV-CV-CV encoding scheme III is a concatenation of a single-mode GKP outer code with an \(n\)-mode GKP-stabilizer code \cite{Noh_many_oscillator}.
    The initial state \(|\textsc{init}\rangle\) for the encoding circuits for the three codes, shown in the third column of the table, is a GKP logical-zero state \(|0\rangle_{\text{GKP}}\) in Eq.~\eqref{eq:gkp0}, the position state \(\left|\hat{q}=0\right\rangle\), or the canonical GKP state \(|\text{GKP}\rangle\) in Eq.~\eqref{eq:canonicalgkp}, respectively.
    }    \label{fig:encoding-overview}
\end{figure*}
%

\section{Introduction}

Quantum error correction (QEC) is one of the most challenging tasks in building large-scale quantum computers. Its basic idea is to encode a few logical degrees of freedom into a larger physical system. Quantum error correction is required if we are to scale up quantum devices both in terms of the length of a quantum communication link or the computational power of a quantum computer.

On one side of the large field of error correction are the well-established qubit or discrete-variable (DV) stabilizer codes \cite{gottesman1997stabilizer,calderbank97}, some of which allow one to suppress noise to arbitrary accuracy given sufficient physical resources once the physical error rate is below certain threshold value --- a manifestation of the all-important threshold theorem \cite{abanov99ft,Knill98ft,Kitaev03ft,shor96ft}. 

On the other side are the bosonic codes \cite{eczoo_bosonic}, which are instead typically designed to satisfy existing resource constraints and which are naturally compatible with several continuous-variable (CV) quantum platforms, including microwave cavities \cite{leghtas2015confining,ofek2016extending,wang2016schrodinger,touzard_coherent,hu2019quantum,campagne2020quantum,grimm2019kerr,lescanne2020exponential,gertler2021protecting,break_even2022,gertler2022experimental,ni2023beating} and motional degrees of freedom of trapped ions \cite{fluhmann18sequential,fluhmann2019encoding,fluhmann20direct,de2022error}. The analog information given by the infinite-dimensional nature of the bosonic Hilbert space also allows for error-correction schemes not available in the DV world \cite{fukui17,fukui18,fukui18surface,terhal19,fukui2019high,pattison2021improved,albert_summary}.

It is fruitful to consider the marriage of the abstract yet scalable qubit paradigm with the practically oriented bosonic paradigm, in hopes of bringing out \textcolor{black}{the} advantages of both.
This direction has so far proven to be promising, with the analog syndrome information provided by a CV layer of correction substantially increasing the correcting power of the outer DV layer in a concatenated scheme.
For example, there have been corroborating studies on concatenating a particular bosonic code --- the GKP code \cite{gkp,terhal2020towards,Puri_gkp} --- with several DV codes, such as the repetition code \cite{fukui17}, $\llbracket 4,2,2\rrbracket$ code \cite{fukui17,fukui18}, surface code \cite{terhal19,terhal2020towards,noh2020fault,noh22surface,fukui18surface,wang19,Hanggli_biased,pattison2021improved}, color code \cite{color_gkp}, XZZX surface code \cite{xzzx_gkp}, and lifted-product QLDPC code \cite{qldpc_gkp}.

Given the abundance of bosonic codes \cite{munro_cat,albert_binomial,grimsmo_rotation,albert2019pair}, there remain other less well-studied ways of encoding qubits into modes that may outperform the aforementioned established DV-CV concatenation scheme in either scalability or resource efficiency. 
Moreover, thanks to the discovery of GKP-stabilizer codes \cite{Noh_many_oscillator}, it is possible to suppress small fluctuations of a logical mode's position and momentum quadratures by encoding it into several physical modes. 
Previous studies show that analog-stabilizer codes utilizing only Gaussian resources are limited, \textcolor{black}{and} cannot protect against Gaussian noise \cite{terhal19,niset09,eisert02}. 
The key concept of GKP-stabilizer codes is introducing auxiliary modes initialized in GKP states \cite{gkp,Terhal_grid,Hanggli_threshold} as non-Gaussian resources to circumvent these no-go theorems. 
Hence, it is interesting to investigate the performance of GKP-stabilizer codes when they are used for protecting a discrete-variable (GKP) subspace of a CV mode.  
Given the recent advances in the realization of GKP codes \cite{campagne2020quantum, break_even2022,fluhmann18sequential, fluhmann2019encoding, de2022error}, and bosonic gate operations \cite{Furusawa_Sum2008,chen2023scalable}, theoretical and numerical studies of GKP codes and their various concatenations is an imperative topic.
A goal of this work is to begin to probe whether utilizing this new code in a concatenation scheme can provide advantages over established schemes as well as schemes using other mode-into-mode bosonic codes \cite{lloyd98,braunstein98,braunstein1998quantum,aoki2009quantum,hayden2016spacetime,hayden21error,faist20continuous,woods2020continuous}. 

The performance of single-mode bosonic codes has been studied in Ref.~\cite{Albert_performance}. However, analytical and numerical studies of multi-mode bosonic codes are far from being exhausted due to the lack of a unified formalism for CV codes. In particular, the motivational GKP-stabilizer work \cite{Noh_many_oscillator} focuses substantially on proof-or-principle examples and lacks a general decoder. In this work, we also provide a unified framework to describe maximum-likelihood decoding against independent and identically distributed (i.i.d.) Gaussian quadrature noise for three different concatenation schemes, one of which includes GKP-stabilizer codes.

\section*{Summary of results}

We study three types of concatenated encodings of qubits into modes that consist of various combinations of \textcolor{black}{the} qubit and bosonic \cite{eczoo_bosonic_stabilizer} stabilizer codes (see Fig. \ref{fig:encoding-overview}). Our motivation is to shed light on which combinations of such qubit-into-mode and mode-into-mode encodings perform the best under standard noise models.

Our first encoding, which we call scheme I, consists of first encoding qubits into a qubit \(\llbracket n,k,d\rrbracket \) stabilizer code \cite{gottesman1997stabilizer,calderbank97} and then further encoding each qubit into its own mode using the GKP qubit-into-mode code \cite{gkp}. This scheme is the one most commonly used when concatenating qubit and bosonic codes \cite{fukui17,fukui18,terhal19,terhal2020towards,noh2020fault,noh22surface,fukui18surface,wang19,Hanggli_biased,color_gkp,xzzx_gkp,qldpc_gkp}. 
The second encoding --- scheme II --- is essentially the reverse of the first: each qubit is first encoded into a ``logical mode'' using the GKP code, which is subsequently encoded into an analog \(\llbracket n,k,d\rrbracket _{\mathbb{R}}\) bosonic stabilizer code \cite{braunstein98,lloyd98,Gu_2009,eczoo_analog_stabilizer}. The third encoding --- scheme III --- substitutes the analog code of scheme II with a GKP-stabilizer mode-into-mode code \cite{Noh_many_oscillator}.

We first observe that encoding maps for all three schemes are of similar type (see Sec. \ref{subsec:encoding}).
We show that encoding maps for all three schemes can be formulated as a Gaussian unitary acting on \(k\) logical GKP qubits tensored with \(n-k\) auxiliary resource states that are either GKP states or position eigenstates (see Fig. \ref{fig:encoding-overview}).

As for decoding, we recast the problem of finding the most likely error under zero-mean Gaussian displacement noise as a linear optimization problem for scheme II, and simplify said problem for scheme III to a closely related linear optimization (see Sec. \ref{sec:EC_dec}). 
Both maximum-likelihood optimizations can be solved exactly in a time that is polynomial in the total number of modes \(n\) of the encoding, yielding an efficient decoder for analog- and GKP-stabilizer codes in the process. 

To benchmark the three schemes, we numerically compare each scheme using repetition, $\llbracket5,1,3\rrbracket$, $\llbracket7,1,3\rrbracket$ (Steane) \cite{steane_code} and $\llbracket9,1,3\rrbracket$ (Shor) \cite{shor_code} codes (see Sec. \ref{sec:example}). 
To generate the examples, we fix the GKP codes to be the same for each scheme, meaning that we are left with the choice of the qubit stabilizer code for scheme I, analog stabilizer code for scheme II, and the GKP-stabilizer code for scheme III. We compare the performance of different schemes with respect to a fixed stabilizer code. 
For example, the comparison based on the repetition code (see Sec. \ref{sec:repetition}) utilizes the qubit repetition code for scheme I, its analog version \cite{lloyd98} for scheme II, and the GKP-repetition code \cite{Noh_many_oscillator} for scheme III.
The comparison based on the five-qubit code (see Sec. \ref{sec:513}) utilizes, respectively, the five-qubit code \cite{laflamme_513}, its analog version \cite{braunstein98}, and the GKP-\(\llbracket 5,1,3\rrbracket \) mode-into-mode code \cite{Noh_many_oscillator}. 
We corroborate some of our numerics by analytically calculating logical error probabilities (see Sec. \ref{sec:calc_logical_rate}).

Scheme II is used, in part, for reference in our numerical comparisons because the outer analog codes are ineffective against Gaussian noise \cite{terhal19,niset09,eisert02}.
However, since the inner code \textcolor{black}{uses} non-Gaussian resources, these no-go theorems technically do not apply to the entire scheme.

We observe that scheme III outperforms scheme I in the repetition-code comparison by a constant factor in the intermediate noise regime despite the number of syndromes and decoding complexity being substantially higher.
However, when the number of physical modes grows, as in the following comparisons of the $\llbracket5,1,3\rrbracket, \llbracket7,1,3\rrbracket$ and $\llbracket9,1,3\rrbracket$ codes,  scheme I easily surpasses other schemes. 

Scheme II performs the worst in the repetition and $\llbracket5,1,3\rrbracket$ code comparisons, while attaining second place for $\llbracket7,1,3\rrbracket$ and $\llbracket9,1,3\rrbracket$ comparisons. 
Despite the existing no-go theorems \cite{terhal19,niset09,eisert02} showing the inability of analog stabilizer codes to correct the i.i.d Gaussian noise considered in this work, scheme II still exhibits error correction ability. This is likely due to the fact that no-go theorem \textcolor{black}{applies} only to a part of scheme II and that the entire scheme \textcolor{black}{uses} non-Gaussian resources in the form of a logical GKP encoding. 
The performance of scheme II is affected by the interplay between the deformation of the logical noise quadrature by the analog stabilizer encoding and the lattice shape of the initial logical GKP states.

Our findings for the repetition-code comparison highlight that scheme III can be on par with the more widely used scheme I while being hardware-efficient in that it requires fewer syndromes, especially when the physical mode number is small.
When the system size grows, scheme III was surpassed by scheme I in $\llbracket5,1,3\rrbracket$ code and further by scheme II in $\llbracket7,1,3\rrbracket$ and $\llbracket9,1,3\rrbracket$ codes.  We postulate that this phenomenon happens because scheme-III encoding suffers from ``error concentration'' whenever a GKP-stabilizer code with high stabilizer weights is used. Since scheme III does not have an outer layer of GKP error correction like scheme I, directly measuring a high-weight stabilizer can cause noise on each mode in the stabilizer's support to add up to an uncorrectable error.

Our encoding and decoding schemes can handle general GKP code lattices. 
This is demonstrated by a study of the error rate with respect to GKP lattice shapes, taking the repetition code as an example (see Appendix \ref{appendix:more_numerics}). \textcolor{black}{From this perspective}, the three schemes are sample points from a family of continuously deformed schemes. 
The ability of deformation enables us to design (outer) encoding methods \textcolor{black}{adapted} to various (inner) stabilizer codes and error models.

Finally, we propose a variant of the GKP-repetition code, which simultaneously suppresses position and momentum error without extra quadrature squeezing (see Sec. \ref{sec:unbiased_code}). Previous work \cite{Noh_many_oscillator} either requires squeezing to suppress both quadratures, or achieves similar suppression in only one quadrature via the GKP-repetition code. We also present a generalization of GKP-stabilizer codes to qudits in Appendix.~\ref{appendix:qudit_version}.

\section{Encodings \& error model}
\label{subsec:encoding}

We describe encoding circuits for the three schemes listed in Fig. \ref{fig:encoding-overview}(d). We finish this section with a description of the displacement error model chosen for our comparison.

\subsection{Scheme I: stabilizer-GKP encoding}
In this encoding, one first makes use of an $\llbracket n,k,d\rrbracket $ stabilizer code \cite{gottesman1997stabilizer,calderbank97} to encode $k$ logical qubits into $n$ physical qubits. Then, each of the $n$ physical qubits is encoded into a harmonic oscillator using the GKP code corresponding to a square lattice in phase space \cite{gkp}. The combined encoding map is from the space of \(k\) qubits into that of \(n\) oscillators,
\begin{eqs}
\text{Stabilizer-GKP:}\quad &\mathbb{Z}_2^{\otimes k} \xrightarrow[]{\text{$\llbracket n,k,d\rrbracket $}} \mathbb{Z}_2^{\otimes n} \xrightarrow[]{\text{GKP}} \mathbb{R}^{\otimes n}~,
\end{eqs}
where \(\mathbb{Z}_2\) (\(\mathbb R\)) represents a qubit (mode).

The stabilizer-GKP encoding can be performed by the following procedure, illustrated in the left panel of Fig.~\ref{fig:encoding-overview} (a). Given $n$ harmonic oscillator modes, we prepare them in GKP states, in which $k$ modes carry logical information of the $k$ logical qubits, and the remaining $(n-k)$ auxiliary modes are in the logical-zero state of the square-lattice GKP code,
\begin{equation}
|0\rangle_{\text{GKP}}=\sum_{n\in\mathbb{Z}}\ket{\hat{q}=2n\sqrt{\pi}}\label{eq:gkp0},
\end{equation}
where \(|\hat{q}=k\rangle\) is the non-normalizable oscillator position state at position \(k\).
Then, we act with a Gaussian circuit \(\enc\) to perform the $\llbracket n,k,d\rrbracket $ stabilizer encoding at the level of GKP qubits. A codeword of the resulting code is simultaneously stabilized by the stabilizers of inner GKP codes and outer embedded qubit stabilizer codes.

\subsection{Scheme II: analog-stabilizer encoding}

In this encoding, each of the $k$ logical qubits is first encoded into a mode using the GKP code. Afterwards, the $k$ ``logical modes'' are encoded into $n$ physical modes using an $\llbracket n,k,d\rrbracket _{\mathbb R}$ analog stabilizer code \cite{braunstein98,lloyd98,Gu_2009,eczoo_analog_stabilizer}. The combined encoding map can be represented as the following,
\begin{eqs}
\text{GKP-analog:}\quad &\mathbb{Z}_2^{\otimes k} \xrightarrow[]{\text{GKP}} \mathbb{R}^{\otimes k} \xrightarrow[]{\text{$\llbracket n,k,d\rrbracket _{\mathbb R}$}} \mathbb{R}^{\otimes n}~.
\end{eqs}

Despite the fact that scheme I and II encodings arise from different concatenation orders [see Fig. \ref{fig:encoding-overview}(a)], the difference in the circuit-level implementation lies only in the initial state of the auxiliary modes.
In other words, the encoding of this scheme can be performed by the same circuit as the previous encoding from Fig. \ref{fig:encoding-overview}(c), but with the \(n-k\) ancillary modes each initialized in the position state $|\hat{q}=0\rangle$.

Before the Gaussian unitary \(\enc\) is applied in the aforementioned circuit, the initial state is stabilized by GKP stabilizers acting on the first $k$ modes and annihilated by position operators $\{\hat{q}_j~|~k < j \leq n\}$ of the auxiliary \(n-k\) modes.
After the unitary is applied, codewords are simultaneously stabilized by logical GKP stabilizers and annihilated by the analog code's \(n-k\) \textit{nullifiers} \(\enc\hat{q}_j \enc^\dagger\). 

Since \(\enc\) is a Gaussian transformation, its action on the mode's position (\(\hat q\)) and momentum (\(\hat p\)) quadrature operators can equivalently be represented as a \(2n\)-dimensional symplectic matrix \(\aenc\) acting on the \(2n\)-dimensional vector of operators \cite{braunstein1998quantum,serafini},
\begin{eqs}\label{eq:symplectic}
    \enc\vec{r}\enc^{\dagger}=\aenc\vec{r}\quad\text{with} \quad \vec{r} = (\hat{q}_1,\cdots,\hat{q}_n,\hat{p}_1,\cdots,\hat{p}_n)^T,
\end{eqs}
where \(\vec{v}^T\) is the transpose of \(\vec{v}\).
Determining how a particular quadrature \(j\) transforms under \(\enc\) amounts to taking the \(j\)th component of both sides. On the right-hand side, this yields an inner product of the \(j\)th row of \(\aenc\) with \(\vec{r}\). 

We will often be interested in how a particular subset of quadratures transforms, for which we only need the set of corresponding rows of \(\aenc\). For such purposes, it is convenient to decompose the encoding matrix into four rectangular submatrices \cite[Appendix. E]{terhal19},
\begin{eqs}
A_{\text{enc}}=\begin{pmatrix}Q\\
G\\
P\\
D
\end{pmatrix}=\begin{pmatrix}k\times2n\text{ matrix}\\
(n-k)\times2n\text{ matrix}\\
k\times2n\text{ matrix}\\
(n-k)\times2n\text{ matrix}
\end{pmatrix}~.
\end{eqs}
The submatrix combinations relevant to schemes II and III are
\begin{eqs}\label{eq:A123}
    A_1=G,~
    A_2=\begin{pmatrix}
    Q\\
    P
    \end{pmatrix},~
    A_{3}=\begin{pmatrix}
    G\\
    D
    \end{pmatrix}~.
\end{eqs}
For the analog stabilizer encoding of scheme II, \(A_1\) represents how the \(n-k\) auxiliary position operators \(\hat{q}_j\) for \(j\in\{k+1,\cdots,n\}\) are transformed into nullifiers \(\enc\hat{q}_j \enc^\dagger\), and \(A_2\) determines how the positions and momenta of \(k\) logical modes are encoded.

\subsection{Scheme III: GKP-stabilizer encoding}

This encoding is a modification of scheme II such that the mode-into-mode outer encoding is now a GKP-stabilizer code \cite{Noh_many_oscillator} (see also Refs. \cite{Terhal_grid,Hanggli_threshold}). The corresponding circuit is yet again of the same type as that depicted in Fig. \ref{fig:encoding-overview}(c), but with the auxiliary modes initialized in the so-called \textit{canonical GKP state} (a.k.a. grid state or trivial GKP code)
\begin{equation}\label{eq:canonicalgkp}
    |\textsc{gkp}\rangle=\sum_{n\in\mathbb{Z}}|\hat{q}=n\sqrt{2\pi}\rangle~.
\end{equation}
This state is the unique simultaneous eigenstate of the canonical-GKP stabilizers $e^{i\sqrt{2\pi}\hat q}$ and $e^{-i\sqrt{2\pi}\hat p}$ with eigenvalue \(+1\), spanning the one-dimensional codespace of the trivial square-lattice GKP code.
The state differs from the logical square-lattice GKP state \eqref{eq:gkp0} in the spacing between the superposed position states.

The canonical GKP state can be transformed from $\ket{0}_{\text{GKP}}$ via squeezing. 
As such, the encoding of Scheme III only differs from that of Scheme I by single-mode squeezing acting on each auxiliary mode. 
Encoded states are stabilized by the \(2(n-k)\) canonical-GKP stabilizers as well as the \(2k\) square-lattice GKP stabilizers of the first \(k\) modes, all conjugated by \(\enc\).

\subsection{Displacement error model}\label{sec: displacement_error}

We adopt a standard error model throughout this paper. 
In order to provide a baseline code comparison, we assume that the encoding, syndrome measurement, and decoding for each scheme are noiseless.
The only source of noise comes after the encoding, when a displacement noise (\textit{a.k.a.} additive Gaussian white-noise \cite{Albert_performance,Noh_thermal_loss,Banaszek2020quantum}) channel is applied on the position and momentum quadratures of each of the \(n\) modes. The position and momentum of each mode \(j\in\{1,2,\cdots,n\}\) are shifted by a random fluctuation \(\xi\) as
\begin{subequations}\label{eq:displacement_noise}
\begin{align}
    {q}_{j}&\rightarrow{q}_{j}+\xi_{j}\\{p}_{j}&\rightarrow{p}_{j}+\xi_{j+n}~.
\end{align}
\end{subequations}
We collect all fluctuations in a \(2n\)-dimensional \textit{noise vector} \begin{eqs}\label{eq:noise_vector}
    \vec{\xi} = (\xi_1,\xi_2,\cdots,\xi_{2n})^T = (\xi_{1,q},\cdots,\xi_{n,q},\xi_{1,p},\cdots,\xi_{n,p})^T.
\end{eqs}
We use either form for the above vector components throughout the manuscript, depending on whether we want to specify if a given quadrature is a position or a momentum.

We develop our decoding formalism with the assumption that the amplitudes of the \(2n\) displacement errors \(\xi_{\ell}\) for \(\ell\in\{1,2,\cdots, 2n\}\) are i.i.d.~Gaussian random variables with the same zero mean and standard deviation \textcolor{black}{$\sigma$}. 

\begin{figure}
    \centering
    \includegraphics[width=0.485\textwidth]{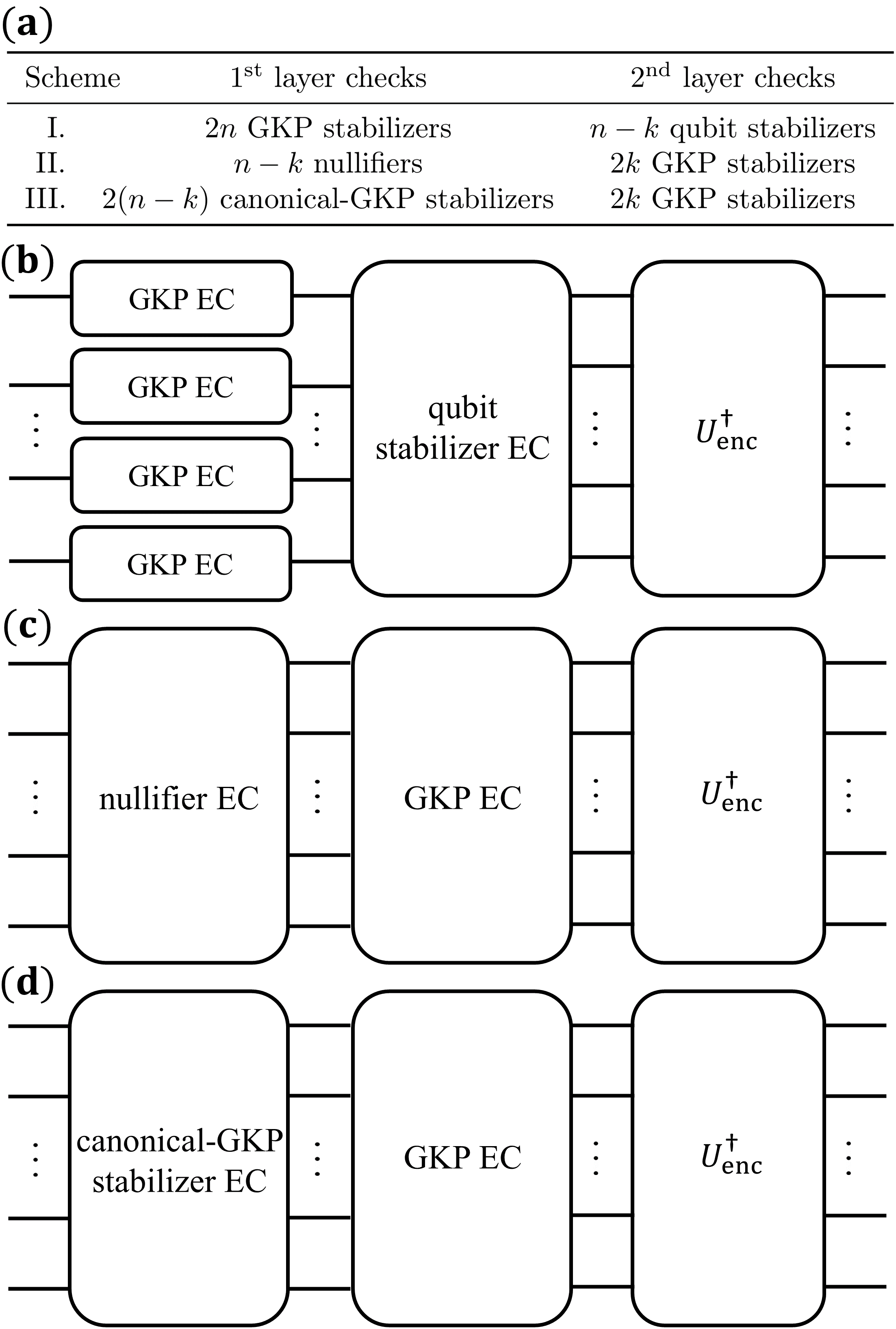}
    \caption{
    \textsc{Decoder summary:} 
    {\textbf{(a)}} Table of check operators that are measured in each of the two layers of error correction of schemes I, II, and III described in Fig. \ref{fig:encoding-overview}. Corresponding circuits for the three schemes are sketched out in \textbf{(b)}, \textbf{(c)}, and \textbf{(d)}, respectively. Each circuit consists of two layers of correction, followed by the inverse encoding map that maps the processed logical information back into the first \(k\) modes depicted in Fig. \ref{fig:encoding-overview}(c).
    }
    \label{fig:decoding}
\end{figure}

\section{Error correction and decoding}
\label{sec:EC_dec}

We summarize the error-correction (EC) processes for the three schemes outlined in Fig. \ref{fig:encoding-overview}. Each round consists of two EC layers --- one for the inner code and one \textcolor{black}{for} the outer. The check operators measured in each layer are listed in Fig. \ref{fig:decoding}(a), and correction circuits for each scheme are in Fig.~\ref{fig:decoding}(b-d), respectively.

For each layer, $\xi$ denotes the actual error that is applied to the system; $z$ denotes a syndrome measurement outcome, which is either real-valued in the case of nullifier-based correction, or a phase in the case of GKP-based correction; and $\xi_*$ denotes the most likely displacement error deduced from the syndrome. 

Our decoding optimizations for schemes II and III are solved by inverting a matrix whose dimensions are at most \(2n\), yielding a polynomial-time \cite{tveit2003complexity} decoder.

\subsection{Scheme I: stabilizer-GKP decoding}\label{subsubsec:scheme1-dec}

A correction round for this scheme consists of GKP qubit error correction (EC), followed by qubit stabilizer EC [see Fig. \ref{fig:decoding}(b)]. This procedure is the most widely used among the three we consider, and we refer the reader to Refs. \cite{terhal19,noh2020fault,larsen21fault,noh22surface,Hanggli_biased,color_gkp,xzzx_gkp} for more details.

\textbf{Layer 1: GKP EC.}
The first step is to measure the GKP stabilizers of each physical GKP qubit, i.e., $e^{i2\sqrt{\pi}\hat{q}_j}$ and $e^{-i2\sqrt{\pi}\hat{p}_j}$ for \(j\in\{1,2,\cdots,n\}\). Measuring these yields a \(2n\)-dimensional syndrome vector,
\begin{equation}\label{eq:syndromes-scheme-I}
    \vec{z}=(z_{1,q},\cdots,z_{n,q},z_{1,p},\cdots,z_{n,p})^T~,
\end{equation} 
consisting of phase-like GKP syndromes $z_{j,q}, z_{j,p}\in[-\sqrt{\pi}/2,\sqrt{\pi}/2)$. 

There are many possible noise vectors \(\vec\xi\) in Eq.~\eqref{eq:noise_vector} that yield a particular syndrome vector, and the next step is to deduce the most likely one (i.e., apply \textit{maximum-likelihood decoding}).
Each noise vector component \(\xi_j\) can be expressed as a sum of some integer multiple of \(\sqrt\pi\) and a remainder term,
\begin{equation}
    \xi_j = m\sqrt{\pi} + R_{\sqrt{\pi}}(\xi_j)~,
\end{equation}
where \(m\in\mathbb Z\), where we use the \textit{remainder function}
\begin{equation}\label{eq:remainder}
    R_s(x)=x-s\left[\frac{x}{s}\right]~,
\end{equation}
and where $[x]$ rounds \(x\) to the nearest integer. The remainder term is precisely what is extracted via syndrome measurements, \(z_j = R_{\sqrt{\pi}}(\xi_j)\), and the shortest deduced displacement vector is thus
\begin{equation}
  \vec{\xi}_* = \vec{z} = R_{\sqrt{\pi}}(\vec{\xi}).
\end{equation}

In order to correct, we apply a collective displacement by $-\vec{\xi}_*$, yielding the residual noise vector
\begin{eqs}\label{eq:gkp_correction}
    \vec{\xi}^\prime = \vec{\xi} - \vec{\xi}_* = \vec{\xi} - R_{\sqrt{\pi}}(\vec{\xi})
\end{eqs}
and completing the first layer of correction.

\textbf{Layer 2: qubit stabilizer EC.} 
The first layer has recovered the logical information back to the logical GKP subspace of each mode. 
The residual noise vector \(\vec{\xi}^\prime\) imposes a logical Pauli error on the inner stabilizer code.
To extract this error, we need to measure the GKP form of the stabilizers of the outer $\llbracket n,k,d\rrbracket $ code. 
These are constructed using tensor products of GKP displacements $X_{j}=e^{-i\sqrt{\pi}\hat p_j}$ and $Z_{j}=e^{i\sqrt{\pi}\hat q_j}$, which act as logical Pauli operators on the inner GKP code of mode \(j\). 
For example, for the $\llbracket 5,1,3\rrbracket $ stabilizer outer code, if we want to extract the syndrome corresponding to the check operator $IXZZX$, we need to measure $\exp[i\sqrt{\pi}(-\hat{p}_2+\hat{q}_3+\hat{q}_4-\hat{p}_5)]$. 
Since the noise vector consists of integer multiples of \(\sqrt{\pi}\), each measured syndrome value can only be \(\pm 1\).

After extracting the binary syndromes, one needs to determine the error based on the syndromes, and apply a GKP Pauli correction operation. These steps complete the second layer of correction.

After the above two-layer round of correction, one can apply the decoding circuit \(\enc^\dagger\) if one wants to obtain the logical information, or repeat the round if one wants to further preserve the information.

\subsection{Scheme II: GKP-analog decoding}\label{subsubsec:scheme2-dec}

A correction round for this scheme consists of nullifier-based mode-into-mode error correction (EC), followed by conventional GKP qubit EC [see Fig. \ref{fig:decoding}(c)].

\textbf{Layer 1: analog EC.}
The first step in this scheme is to measure the nullifiers \(\enc \hat{q}_j \enc^\dagger\) for \(j\in\{k+1,k+2,\cdots,n\}\) of the outer analog stabilizer code. This yields real-valued nullifier syndromes $z_j$ for \(j\in\{1,2,\cdots,n-k\}\), collectively denoted by the \((n-k)\)-dimensional syndrome vector $\vec{z}$. 

Nullifiers are related to the unencoded auxiliary mode position operators by the \((n-k)\)-by-\(2n\)-dimensional matrix \(A_1\) from Eq. (\ref{eq:A123}), and the syndrome vector is similarly related to the noise vector \(\vec{\xi}\) in Eq.~\eqref{eq:noise_vector} by the equation \(\vec{z}=A_1 \vec{\xi}\). 

Since \(A_1\) is rectangular, several different noise vectors can yield the same syndrome vector. The principle of maximum-likelihood decoding implies that we pick the shortest \(\xi\) that is compatible with the syndromes,
\begin{equation}\label{eq:opt1}
    \vec{\xi_*}=\argmin_{A_1 \vec{\xi}=\vec{z}} \|\vec{\xi}\|,
\end{equation}
where \(\|\vec{v}\|=\sqrt{\vec{v}\cdot\vec{v}}\) is the Hilbert-Schmidt norm of \(\vec{v}\). 

Finding the shortest compatible noise vector turns out to be a standard minimization problem
\citep{bertsimas1997introduction}, whose solution is given by
\begin{eqs}\label{eq:ml-error}
\vec{\xi_*}=A_1^{T}(A_1A_1^{T})^{-1}\vec{z}.
\end{eqs}
Above, \(A_1^{T}\) is the transpose of \(A_1\), and \(A_1 A_1^{T}\) is a \((n-k)\)-dimensional square matrix that is invertible since all nullifier measurements are linearly independent. Note that \(A_1^{T}(A_1A_1^{T})^{-1}\) is also called the right Moore-Penrose pseudoinverse of \(A_1\). 

The first layer of error correction is then performed by deducting the estimated noise vector $\vec{\xi_*}$ from the system. The updated quadrature noise vector $\vec{\xi}^{\prime}$ takes the form
\begin{equation}\label{eq:post-EC-error}
    \vec{\xi}^{\prime}=\vec{\xi}-\vec{\xi_*} = P_{A_1}^{\perp}\vec{\xi}~,
\end{equation}
where we use the formula for the projection onto the kernel of a matrix \(M\),
\begin{eqs}\label{eq:proj}
    P_{M}^{\perp}=I-M^T(M M^T)^{-1} M~,
\end{eqs} 
satisfying \(M P_M^\perp = 0\) and \(P_M^\perp M^T = 0\).
In other words, this layer of correction applies shifts to the nullifier quadratures such that the nullifier expectation values are reset to zero. 

The above layer of correction yields a shorter residual noise vector: \(\vec{\xi}^\prime\) is a shorter than \(\vec \xi\) since \(P_{A_1}^{\perp}\) is a projection. 
While this layer cannot decrease the variance of the logical-mode quadrature noise~\cite{terhal19,Noh_many_oscillator}, we are interested in encoding logical qubits in said modes and thus proceed to the second layer of correction.

\textbf{Layer 2: GKP EC.}
The above analog correction procedure has mapped the outer mode-into-mode encoding back into the ``logical'' \(k\)-mode space defined as the collective 0-eigenvalue subspace of all \(n-k\) nullifiers. The next layer consists of detecting and correcting logical errors of the \(k\) GKP qubits encoded in the logical mode space. Such errors are caused by the residual noise vector \(P_{A_1}^{\perp}\vec{\xi}\) in Eq.~\eqref{eq:post-EC-error}.

The \(2k\) check operators measured in this round are GKP stabilizers of the first \(k\) modes conjugated by the encoding unitary \(\enc\), i.e., $\enc e^{i2\sqrt{\pi}\hat q_j} \enc^\dagger$ and $\enc e^{-i2\sqrt{\pi}\hat p_j} \enc^\dagger$ for \(j\in\{1,2,\cdots,k\}\). 
Measuring these yields a \(2k\)-dimensional syndrome vector,
\begin{equation}\label{eq:syndromes}
    \vec{z}^\prime=(z_{1,q}^\prime,...,z_{k,q}^\prime,z_{1,p}^\prime,...,z_{k,p}^\prime)^T~,
\end{equation} 
consisting of GKP syndromes $z_{j,q}^\prime, z_{j,p}^\prime\in[-\sqrt{\pi}/2,\sqrt{\pi}/2)$. 

The syndrome vector can equivalently be represented as the remainder of the residual noise vector \(P_{A_1}^{\perp}\vec{\xi}\) in Eq.~\eqref{eq:post-EC-error} encoded into the GKP logical space via the submatrix \(A_2\) in Eq.~\eqref{eq:A123} of the symplectic matrix \(\aenc\) corresponding to \(\enc\),
\begin{equation}\label{eq:gkpz}
 \vec{z}^\prime=R_{\sqrt{\pi}}(A_2P_{A_1}^{\perp}\vec{\xi})~.
\end{equation}
The remainder function $R_{\sqrt{\pi}}$, applied to each entry of the vector in the argument, ensures that only the modular quadrature information is extracted from the processed noise vector.

Applying the maximum likelihood principle, we need to find the most probable error vector after the first error correction, $\vec{\xi}^\prime_{*}$, that is consistent with the syndromes, $A_2\vec{\xi}^\prime_{*}=\vec{z}^\prime$. However, since the entries of the residual noise vector Eq.~\eqref{eq:post-EC-error} are correlated, the most likely error vector cannot be calculated via minimizing the norm of $\vec{\xi}^\prime$. Rather, as $P_{A_1}^{\perp}$ is a deterministic linear matrix, the most probable $\vec{\xi}^\prime$ should come from the most probable $\vec{\xi}$. This yields the optimization problem\footnote{\label{ft:opt} There is a caveat in the minimization constraint in Eq. \eqref{eq:optscheme2layer2}. That is whether to view $\vec{z}$ as a constant vector set by the measurement outcome in the first layer EC, or substitute $\vec{z}=A_1\vec{\xi}$ into the constraint. In Eq. \eqref{eq:optscheme2layer2}, we choose the latter. If we adopt the former point of view, $\vec{\xi}^{\prime}$ now becomes $\vec{\xi}-A_1^T(A_1A_1^T)^{-1}\vec{z}$, and the optimization problem becomes
\[
    \vec{\xi}^{\prime}_*=\argmin_{A_2(\vec{\xi}-A_1^T(A_1A_1^T)^{-1}\vec{z})=\vec{z}^{\prime}} \|\vec{\xi}\|-A_1^T(A_1A_1^T)^{-1}\vec{z}.
\]
The result is $ \vec{\xi}_{*}^{\prime}=A_2^T(A_2A_2^T)^{-1}\vec{z}^{\prime}-P_{A_2}^{\perp}A_1^T(A_1A_1^T)^{-1}\vec{z}$. This $\vec{\xi}_{*}^{\prime}$ is different from that in Eq. \eqref{eq:scheme2postlayer2}. However, if we calculate the final errors on the logical mode, it becomes $ \vec{\xi}_{\text{final},l}=A_2(\vec{\xi}^\prime-\vec{\xi}_{*}^{\prime})=A_2(\vec{\xi}-A_1^T(A_1A_1^T)^{-1}\vec{z})-\vec{z}^{\prime}$. In the final step we substitute in $\vec{z}=A_1\vec{\xi}$, then $\vec{\xi}_{\text{final},l}=A_2 P_{A_1}^{\perp}\vec{\xi}-\vec{z}^{\prime}$, which is the same result as obtained in Eq. \eqref{scheme2finalerror}. This shows that using either point of view we get the same result for the final logical mode errors.}
\begin{equation}\label{eq:optscheme2layer2}
    \vec{\xi}^{\prime}_*=P_{A_1}^{\perp}\left(\argmin_{A_2P_{A_1}^{\perp} \vec{\xi}=\vec{z}^{\prime}} \|\vec{\xi}\|\right)
\end{equation}
for the GKP layer of correction.
The above optimization is solved in the same way as Eq. (\ref{eq:opt1}), yielding
\begin{equation}\label{eq:scheme2postlayer2}
    \vec{\xi}_{*}^{\prime}=P_{A_{1}}^{\perp}(A_{2}P_{A_{1}}^{\perp})^{T}\left((A_{2}P_{A_{1}}^{\perp})(A_{2}P_{A_{1}}^{\perp})^{T}\right)^{-1}\vec{z}^{\prime}\,.
\end{equation}

After implementing the above as the recovery displacement for this second layer, we apply \(\aenc\) (the decoding map in the Heisenberg picture).
The final residual noise vector is
\begin{eqs}\label{eq:finalerror}
    \vec{\xi}_{\text{final}} = \aenc(\vec{\xi}^\prime-\vec{\xi}_{*}^{\prime})~.
\end{eqs}
If we want to focus on the errors of the first \(k\) ``logical'' modes that house the GKP qubits in Fig. \ref{fig:encoding-overview}(c), we can instead decode using the submatrix \(A_2\) in Eq.~\eqref{eq:A123} and remainder function \textcolor{black}{$R$} in Eq.\eqref{eq:remainder} to obtain
\begin{align}\label{scheme2finalerror}
    \vec{\xi}_{\text{final},l}&=A_2(\vec{\xi}^\prime-\vec{\xi}_{*}^{\prime})\\&=A_2 P_{A_{1}}^{\perp}\vec{\xi}-R_{\sqrt{\pi}}(A_{2}P_{A_{1}}^{\perp}\vec{\xi})~.\nonumber
\end{align}

\textbf{Simplifying cases.}
The remainder operation in the noise vector expression in Eq.~\eqref{scheme2finalerror} can be removed in the low-noise case, since \(R_{\sqrt{\pi}}(\xi)=\xi\) for sufficiently small \(\xi\). 
In that case, \(\vec{\xi}-\vec{\xi}_{*}^{\prime} = P_{\tilde A}^\perp \vec{\xi}\), where \(P_{\tilde A}^\perp\) in Eq.~\eqref{eq:proj} is the projection onto the kernel of the block matrix \(\tilde{A} = \left(\begin{smallmatrix} A_1 \\ A_2 \end{smallmatrix}\right)\).
This means that, if we disregard the caveat that the logical GKP syndrome is measured modulo $\sqrt{\pi}$, we can combine the two layers of the error correction into one and obtain the same result. 

Another simplifying case is the condition $A_1 A_2^T=0$, which means that the measurements of $A_1$ and $A_2$ are in two orthogonal hyper-planes. In that case, the optimization problems for the two layers become independent, $R_{\sqrt{\pi}}(A_2P_{A_1}^{\perp}\vec{\xi})=R_{\sqrt{\pi}}(A_2\vec{\xi})$, and the order in which the corrections for the two layers are applied does not matter.

\subsection{Scheme III: GKP-stabilizer decoding}\label{subsubsec:scheme3-dec}

A correction round for this scheme consists of canonical-GKP stabilizer mode-into-mode error correction \cite{Noh_many_oscillator}, followed by conventional GKP qubit error correction [see Fig. \ref{fig:decoding}(d)]. We modify the canonical-GKP decoding procedure such that the unitary \(\enc^\dagger\) is applied last (whereas it originally preceded the canonical-GKP syndrome measurements in Ref.~\cite{Noh_many_oscillator}) in order to make this scheme consistent with schemes I and II, and in order to demonstrate how multiple rounds of error correction can be performed. 

Our modification also allows the measurement of other sets of GKP stabilizer generators. The generators are determined by the matrix \(A_3\) in Eq.~\eqref{eq:A123}, but there are many such possible matrices, all related to each other by a linear transformation of rows. This means we can optimize the generator matrix $A_3^\prime$ such that each row has the lowest possible norm, which, according to our observations, improves decoder performance.

\textbf{Layer 1: GKP-stabilizer EC.}
The first layer of this scheme is similar to the second layer of scheme II in that both measure GKP-type stabilizers. Here, one starts by measuring the \(2(n-k)\) canonical-GKP stabilizers associated with the \(n-k\) auxiliary modes, namely, the operators \(\enc e^{i\sqrt{2\pi}\hat q_j} \enc^\dagger\) and \(\enc e^{-i\sqrt{2\pi}\hat p_j}\enc^\dagger\) for \(j\in\{k+1,k+2,\cdots,n\}\). Measuring these yields a \(2(n-k)\)-dimensional vector \(\vec{z}\) of canonical-GKP syndromes $z_{j,q}, z_{j,p}\in[-\sqrt{\pi/2},\sqrt{\pi/2})$ [cf. Eq. (\ref{eq:syndromes})].

The syndrome vector can equivalently be represented as the remainder of the noise vector \(\vec{\xi}\) in Eq.~\eqref{eq:noise_vector} encoded into the GKP logical space via the submatrix \(A_3\) in Eq.~\eqref{eq:A123} of the symplectic matrix \(\aenc\) corresponding to \(\enc\),
\begin{equation}\label{eq:gkpz}
 \vec{z} = R_{\sqrt{2\pi}}(A_3\vec{\xi})~.
\end{equation}
The remainder function $R_{\sqrt{2\pi}}$, as defined in Eq.\eqref{eq:remainder}, now modulo \(\sqrt{2\pi}\) because canonical GKP states are used for auxiliary modes, once again extracts only the modular quadrature information.

We once again pick the shortest $\vec{\xi}$ that is compatible with the syndromes \(\vec{z}\), which yields an optimization identical to that from Eqs. (\ref{eq:ml-error}-\ref{eq:post-EC-error}),
\begin{equation}\label{eq:opt3}
    \vec{\xi}_*=\argmin_{A_3 \vec{\xi}=\vec{z}} \|\vec{\xi}\| = A_3^T(A_3A_3^T)^{-1}R_{\sqrt{2\pi}}(A_3\vec{\xi})~.
\end{equation}
Subtracting this most likely correction from the initial noise vector completes this layer and yields
\begin{equation}\label{eq:post-EC-error-III}
  \vec{\xi}^\prime=\vec{\xi}-\vec{\xi}_*=\vec{\xi}-A_3^T(A_3A_3^T)^{-1}R_{\sqrt{2\pi}}(A_{3}\vec{\xi})~.
\end{equation}
Note that the presence of the remainder function \(R\) obstructs us from the simplifications we were able to do for the first layer of scheme II [cf. Eq. (\ref{eq:post-EC-error})].

\textbf{Layer 2: GKP EC.}
We proceed to measure the \(2k\) GKP stabilizers of the first \(k\) modes conjugated by the encoding unitary \(\enc\), recovering the same \(2k\)-dimensional GKP syndrome vector in Eq.~\eqref{eq:syndromes} as that in layer 2 of scheme II.

Proceeding analogously to scheme II, we express the syndrome vector in terms of the layer-one residual error vector \(\vec{\xi}^\prime\) in Eq.~\eqref{eq:post-EC-error-III} encoded into the logical modes using the rectangular matrix \(A_2\) in Eq.~\eqref{eq:A123} and restricted to only its modular components via the GKP-qubit remainder operation,
\begin{equation}
    \vec{z}^{\prime}=R_{\sqrt{\pi}}A_{2}\left(\vec{\xi}-A_{3}^{T}(A_{3}A_{3}^{T})^{-1}R_{\sqrt{2\pi}}(A_{3}\vec{\xi})\right)~.
\end{equation}
Above, the shorthand notation \(R_{\sqrt{\pi}}A_{2}(\vec{v})=R_{\sqrt{\pi}}(A_{2}(\vec{v}))\).

The maximum likelihood problem for this layer is to find the most probable original error configuration \(\vec\xi\) that is compatible with \(\vec{z}^\prime\). This time, however, the optimization is not linear because the map from \(\vec\xi\) to \(\vec{\xi}^\prime\) in Eq.~\eqref{eq:post-EC-error-III} is not linear due to the remainder function \(R_{\sqrt{2\pi}}\).

We proceed with a related \textit{linear} optimization problem
\begin{equation}
    \vec{\xi}_{*}^{\prime}=\argmin_{\vec{z}^{\prime}=A_{2}(\vec{\xi}-A_3^T(A_3A_3^T)^{-1}\vec{z})}\|\vec{\xi}\|~,
\end{equation}
where \(\vec{z}\) is a constant fixed by the measurement outcomes.
The solution is
\begin{equation}
   \vec{\xi}_{*}^{\prime}=A_2^T(A_2A_2^T)^{-1}(\vec{z}^{\prime}+A_2A_3^T(A_3A_3^T)^{-1}\vec{z})~.
\end{equation}

The above optimization is different from the nonlinear case where \(\vec{z}\) is input as a function of \(\vec\xi\). However, the two types of optimizations were tried for Scheme II (see foonote \ref{ft:opt}) and, despite yielding different outcomes, still corresponded to the same residual noise vector \(\vec{\xi}_{\text{final},l}\) on the \(k\) logical modes. We thus have some evidence to believe that this optimization may not be too far off from the true nonlinear one.

Using the above result for correcting layer-two displacement yields the final residual noise vector \(\vec{\xi}_{\text{final}}\) in Eq.~\eqref{eq:finalerror} after the two layers of scheme III and the decoding operation \(\aenc\). 
The logical-mode residual noise subvector is
\begin{align}\label{eq:finalerror3}
&\vec{\xi}_{\text{final},l}=A_{2}(\vec{\xi}^\prime-\vec{\xi}_{*}^{\prime})\\=&A_{2}\vec{\xi}-A_{2}A_{3}^{T}(A_{3}A_{3}^{T})^{-1}\vec{z}-R_{\sqrt{\pi}}(A_{2}(\vec{\xi}-A_{3}^{T}(A_{3}A_{3}^{T})^{-1}\vec{z})).\nonumber
\end{align}

One can view both Scheme II and III as examples from a family of schemes whose initial auxiliary modes are in GKP states with stabilizers $e^{i\sqrt{2\pi/\alpha}\hat{q}_i}$ and $e^{i\sqrt{2\pi \alpha}\hat{p}_i}$. The period in $q_i$ is $\sqrt{2\pi \alpha}$ and that in $p_i$ is $\sqrt{2\pi/\alpha}$. Scheme III corresponds to the case $\alpha=1$, while Scheme II can be viewed as the limiting case $\alpha\to\infty$. (Scheme I is $\alpha=2$.) We apply $R_{\sqrt{2\pi \alpha}}$ to the $q_i$ syndrome measurement and $R_{\sqrt{2\pi/\alpha}}$ to that of $p_i$. When $\alpha\to\infty$, the period in $q_i$ goes to infinity so no rounding is needed in the first layer EC of Scheme II and $\vec{z}=A_1\vec{\xi}$ . But the period in $p_i$ becomes infinitesimal. Applying $R_{\sqrt{2\pi/\alpha}}$ will round any measurement result to $0$. So no information can be subtracted from $p_i$ measurement and they are omitted in Scheme II. Correspondingly, the $A_3$ matrix in Scheme III reduces to $A_1$ in Scheme II.

\subsection{Calculating logical error rates}\label{sec:calc_logical_rate}

In order to compare the above schemes, one can calculate the logical error probabilities induced by the residual noise vector \(\vec{\xi}_{\text{final}}\) on the logical GKP qubits housed in the \(k\) logical modes [see Fig. \ref{fig:encoding-overview}(c)]. Such a calculation is done somewhat differently in scheme I than in schemes II and III, due to the latter two having continuous-variable outer codes.

Let $\mathfrak{p}(\vec{\xi})$ denote an arbitrary distribution for the initial noise vector $\vec{\xi}$ in Eq.~\eqref{eq:noise_vector}. Recall that the noise vector tracks a particular instance of random quadrature displacements in Eq.~\eqref{eq:displacement_noise}, which are usually independently distributed according to a Gaussian distribution with mean zero and fixed standard deviation (in which case $\mathfrak{p}(\vec{\xi})\equiv\prod_{j=1}^{2n} \mathfrak{p}(\xi_{j})$ with \(\mathfrak{p}(\xi_{j})\) a Gaussian distribution). We emphasize that the analysis of this subsection is independent of the choice of distribution.

\textbf{Schemes II \& III.} After two layers of EC and application
of $\enc^{\dagger}$, the initial noise vector $\vec{\xi}$ transforms
as
\begin{equation}
\vec{\xi}\to\vec{\xi}_{\text{final}}\equiv\vec{f}(\vec{\xi})\,,
\end{equation}
where \(\vec{\xi}_{\text{final}}\) in Eq.~\eqref{eq:finalerror} is the residual \(2n\)-dimensional logical-mode noise vector after two layers of correction and a decoding map for either schemes II or III, and the map from \(\vec\xi\) to this vector is represented by the vector-valued function $\vec{f}$. The first \(2k\) components of this vector are in Eq. \eqref{scheme2finalerror} and \eqref{eq:finalerror3} for schemes II and III, respectively.

On the logical-mode level, the probability of a displacement by $\eta$
of a logical-mode quadrature $j\in\{1,2,\cdots,2k\}$ is an integral over
contributions from all shifts $\vec{\xi}$ that are compatible with
the final outcome being the $j$th component of $\vec{f}(\vec{\xi}\,)$,
\begin{equation}
\text{Pr}_{j}(\eta\,|\,\vec{f}\,)=\int_{-\infty}^{\infty}d^{2n}\xi\,\mathfrak{p}(\vec{\xi}\,)\,\delta\left(\eta-f_{j}(\vec\xi)\right),\label{eq:disp}
\end{equation}
where $d^{2n}\xi\equiv d\xi_{q_1}...d\xi_{q_n}d\xi_{p_1}...d\xi_{p_n}$ is the integration measure on all quadratures.
This is the logical-mode displacement distribution associated
with $\vec f$.

On the logical-qubit level, GKP error correction succeeds whenever
$\eta-R_{\sqrt{\pi}}(\eta)$ is an even integer. In other words, the
probability $p_{j}^{\text{no error}}$ of successful correction of
the $j$th quadrature is the integral of the displacement distribution
in Eq.~\eqref{eq:disp} over a set of intervals centered at even multiples
of $\sqrt{\pi}$ (see, e.g., \cite[Eq. (10)]{noh22surface}),
\begin{equation}\label{eq:integrals}
p_{j}^{\text{no error}}=\sum_{m\in\mathbb{Z}}\int_{\left(2m-\frac{1}{2}\right)\sqrt{\pi}}^{\left(2m+\frac{1}{2}\right)\sqrt{\pi}}d\eta\,\text{Pr}_{j}(\eta\,|\,\vec{f}\,)\,.
\end{equation}
For few-mode codes, these integrals can often be done analytically.

Logical errors result when at least one $p_{j}^{\text{no error}}$ is
nonzero.
For a code with \(k\) logical qubits and uncorrelated displacement errors,
the logical error probability is the complement of the product of no-error probabilities of all of the quadratures,
\begin{equation}
p^{\text{logical error}}=1-\prod_{j=1}^{2n}p_{j}^{\text{no error}}\,.
\end{equation}
For correlated noise, this should become a lower bound.

\textbf{Scheme I.}
After the first layer of EC for this scheme, each mode can be readily treated as a GKP qubit, encoded in a mode that in turn is made up of a position and a momentum quadrature. As this is the first layer, there is no additional processing of the noise vector, meaning that error probabilities can be calculated as a special instance of those of schemes II and III, but with the processing function \(\vec{f}\) being identity.
The respective probabilities of no bit- or no phase-flips for a GKP qubit encoded in mode \(j\) are otherwise analogous to Eq. (\ref{eq:integrals}),
\begin{eqs}
p_j^{\text{no \(Z\) error}}&=\sum_{m\in\mathbb{Z}}\int_{\left(2m-\frac{1}{2}\right)\sqrt{\pi}}^{\left(2m+\frac{1}{2}\right)\sqrt{\pi}}d\eta\,\text{Pr}_{j}(\eta\,|\,1),\\
p_j^{\text{no \(X\) error}}&=\sum_{m\in\mathbb{Z}}\int_{\left(2m-\frac{1}{2}\right)\sqrt{\pi}}^{\left(2m+\frac{1}{2}\right)\sqrt{\pi}}d\eta\,\text{Pr}_{j+n}(\eta\,|\,1)~.
\end{eqs}

With these intrinsic $X$ and $Z$ error probabilities of physical GKP qubits, one can calculate the logical error probability just as in the usual qubit stabilizer codes. The final expression depends on the stabilizer code we choose.

\textbf{Simplified error-rate calculations.}
We have discovered a simplification in calculating the logical error rate of a two-layer round of error correction for schemes II and III. Namely, calculating the rates \(p_{j}^{\text{no error}}\) does not require the layer-two GKP recovery operation to be present in the function \(\vec{f}\).

In Appendix \ref{appendix:logicalGKPEC}, we show that, if \(\vec{f}(\vec{\xi})\) defines the resulting error vector after one layer of correction, the second layer modifies as \(\vec{f}(\vec{\xi})\to\vec{f}(\vec{\xi})-R_{\sqrt{\pi}}(\vec{f}(\vec{\xi}))\), identical to the bare GKP correction scheme in Eq. (\ref{eq:gkp_correction}).
We combine this with the fact that each of the \(f\)-dependent integrals in Eq. (\ref{eq:integrals}) is invariant under \(f\to f-R_{\sqrt{\pi}}f\),
\begin{equation}
    \int_{\left(2m-\frac{1}{2}\right)\sqrt{\pi}}^{\left(2m+\frac{1}{2}\right)\sqrt{\pi}}\!\!\!\!d\eta\text{Pr}_{j}(\eta\,|\,\vec{f}\,)\,=\int_{\left(2m-\frac{1}{2}\right)\sqrt{\pi}}^{\left(2m+\frac{1}{2}\right)\sqrt{\pi}}\!\!\!\!d\eta\text{Pr}_{j}(\eta\,|\,\vec{f}-R_{\sqrt{\pi}}\vec{f}\,)\,,
\end{equation}
to show that explicit GKP recovery is not necessary to calculate the logical-mode, and therefore logical-qubit, error rates. 

GKP recovery, of course, still has to be performed to yield a logical qubit encoding governed by the aforementioned error rates.
The benefit of GKP correction still exists in the error-rate calculation because the displacement distribution is integrated over a union of segments comprising half of the real line, which is the correctable region of a GKP encoding.

\begin{figure}[h]
    \centering
    \includegraphics[width=0.45\textwidth]{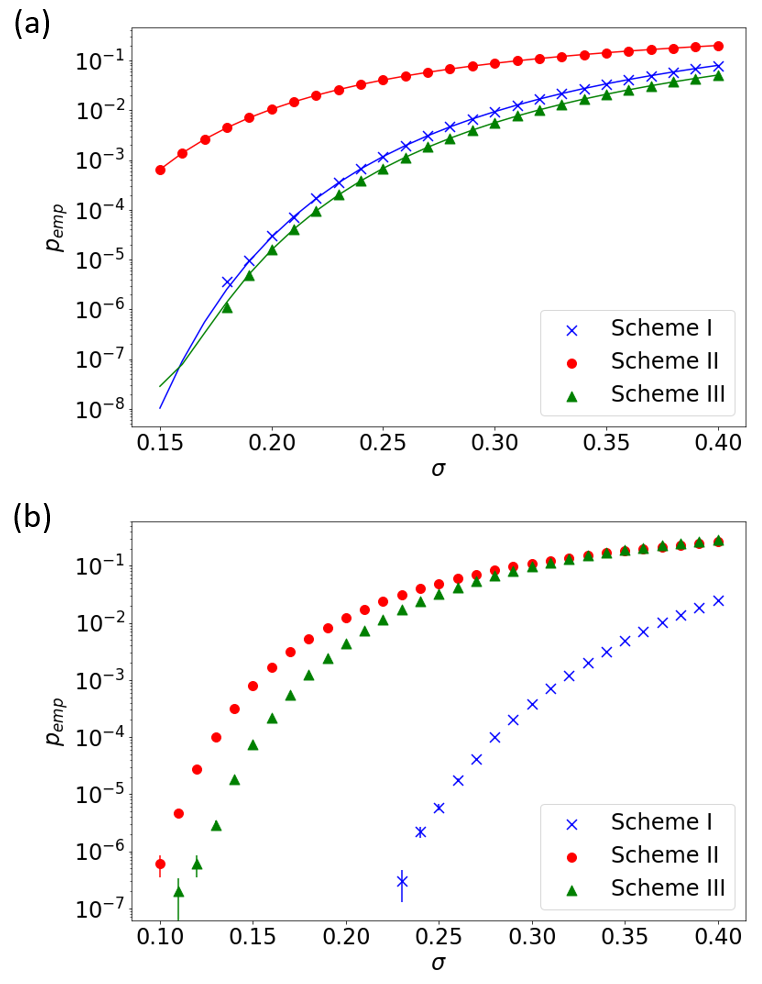}
    \caption{\textsc{Numerical simulations:} (a) is the numerical simulation of 3-qubit repetition codes in three different schemes; (b) is the numerical simulation of $\llbracket 5,1,3\rrbracket $ in three different schemes, according to the prescription in Appendix.~\ref{appendix:numerics}. The horizontal axis $\sigma$ represents the variance of the Gaussian displacement channel as discussed in Sec.~\ref{sec: displacement_error}. The vertical axis $p_{\text{emp}}$ represents the empirical logical error rates that are numerically calculated by the method in Appendix~\ref{appendix:numerics}.}  Each data point is obtained by averaging $10^7$ samples.
    \label{fig:repetition_513_numerics}
\end{figure}

\section{Examples}\label{sec:example}
We numerically benchmark two sets of examples of the three schemes from Fig. \ref{fig:encoding-overview}, one using the repetition code and its bosonic variants, and another using the five-qubit code and its variants. Details of our Monte Carlo sampling of quadrature noise and its decoding are given in Appendix.~\ref{appendix:numerics}.

\subsection{Repetition-code based comparison}\label{sec:repetition}

The repetition code \cite{peres85reversible} is an important example in both classical and quantum error-correcting codes. There are previous studies on the scheme I concatenated repetition-GKP codes \cite{fukui17} and $\llbracket 4,2,2\rrbracket$ -GKP codes \cite{fukui18}. 

The three encoding schemes for the repetition-code example are described by the following map [cf. Fig. \ref{fig:encoding-overview}(c)],
\begin{eqs}
\text{I. repetition-GKP:}\quad &\mathbb{Z}_2 \xrightarrow[]{\text{repetition}} \mathbb{Z}_2^{\otimes 3} \xrightarrow[]{\text{GKP}} \mathbb{R}^{\otimes 3}\\
\text{II. analog-repetition:}\quad &\mathbb{Z}_2 \xrightarrow[]{\text{GKP}} \mathbb{R} \xrightarrow[]{\text{analog repetition}} \mathbb{R}^{\otimes 3}\\
\text{III. GKP-repetition:}\quad &\mathbb{Z}_2 \xrightarrow[]{\text{GKP}} \mathbb{R} \xrightarrow[]{\text{GKP-repetition}} \mathbb{R}^{\otimes 3}~.
\end{eqs}
Scheme I is a standard concatenated repetition-GKP code constructed by encoding each physical qubit into a GKP code. 
Scheme II  is the analog repetition code (see Appendix A of Ref. \cite{Noh_many_oscillator}) concatenated with a bona-fide GKP code.
Scheme III replaces the analog code with the GKP-repetition code \cite{Noh_many_oscillator}, which suppresses the variance of position error acting on logical mode to $\sigma^2/3$ if the error rate is sufficiently low, concatenated with a GKP qubit code. 

Encoding circuits for all three schemes are of similar form (see Fig. \ref{fig:encoding-overview}). The relevant Gaussian unitary \(\enc\) is
\begin{eqs}
\enc=\text{CNOT}_{1\rightarrow 3}\text{CNOT}_{1\rightarrow 2}~,
\end{eqs}
where our CNOT \footnote{The CV CNOT gate is $\text{CNOT}_{j\rightarrow k}=\exp[-i \hat{q}_j \hat{p}_k]$.} two-mode gate acts on quadratures as
\begin{eqs}\label{eq:cv_heisenberg}
\text{CNOT}_{j\rightarrow k} \hat{q}_j \text{CNOT}_{j \rightarrow k}^\dagger &=\hat{q}_j\\
\text{CNOT}_{j\rightarrow k} \hat{q}_k \text{CNOT}_{j \rightarrow k}^\dagger &=\hat{q}_k-\hat{q}_j\\
\text{CNOT}_{j\rightarrow k} \hat{p}_j \text{CNOT}_{j \rightarrow k}^\dagger &=\hat{p}_j+\hat{p}_k\\
\text{CNOT}_{j\rightarrow k} \hat{p}_k \text{CNOT}_{j \rightarrow k}^\dagger &=\hat{p}_k~.
\end{eqs}
The transformation of Hadamard and CZ gates\footnote{ CV Hadamard gate is $\exp[i\frac{\pi}{2} a^\dagger a]$ and CZ is chosen to be $H_k \text{CNOT}_{j\rightarrow k} H_k^\dagger$.} are

\begin{eqs}
H \hat{q} H^\dagger &=\hat{p}\\
H \hat{p} H^\dagger &=-\hat{q}\\
\text{CZ}_{j \rightarrow k} \hat{q}_j \text{CZ}_{j \rightarrow k}^\dagger &= \hat{q}_j,\\
\text{CZ}_{j \rightarrow k} \hat{q}_k \text{CZ}_{j \rightarrow k}^\dagger &= \hat{q}_k,\\
\text{CZ}_{j \rightarrow k} \hat{p}_j \text{CZ}_{j \rightarrow k}^\dagger &= \hat{p}_j-\hat{q}_k,\\
\text{CZ}_{j \rightarrow k} \hat{p}_k \text{CZ}_{j \rightarrow k}^\dagger &= \hat{p}_k-\hat{q}_j,
\end{eqs}
which will be used later.

The error-correction for scheme I (see Sec. \ref{subsubsec:scheme1-dec}) specializes to the following. First, we perform GKP error correction on individual GKP modes to eliminate small displacement errors acting on each mode. Then we measure repetition-code Pauli stabilizers $Z_1 Z_2$ and $Z_1 Z_3$ and correct GKP logical errors acting on individual modes. The total number of syndromes is 8.\footnote{We apply the minimum weight decoder in Scheme I as in usual qubit codes. For 3-repetition code, this is related to the maximal likelihood decoder in Schemes II and III as follows. Measuring the stabilizer $Z_1Z_2$ and $Z_1Z_3$ gives an outcome vector $\vec{z}$. Using maximal likelihood to deduce the most likely error is the same as in Eq.\eqref{eq:ml-error}. However, knowing that the errors after the first round should be an integer multiple of $\sqrt{\pi}$,  the most likely error should be $R_{\sqrt{\pi}}(A_1^{T}(A_1A_1^{T})^{-1}\vec{z})$. One can check that this gives the same prediction as the minimal weight decoder.}

For scheme II, before encoding, the first mode is a GKP logical state, and the rest of the modes are initialized in $\ket{\hat{q}=0}$. After the encoding, the code state is stabilized by GKP stabilizers and nullified by nullifiers simultaneously. The GKP stabilizers are $\exp[-i 2\sqrt{\pi}(\hat{p}_1+\hat{p}_2+\hat{p}_3)]$ and $\exp[i 2\sqrt{\pi}\hat{q}_1]$, and these square-roots yield logical Pauli-\(X,Z\) gates for the encoded qubit.
The nullifiers of the outer code are \(\hat{q}_2-\hat{q}_1\) and \(\hat{q}_3- \hat{q}_1\). The total number of syndromes is 4.

For scheme II error-correction procedure  (see Sec. \ref{subsubsec:scheme2-dec}), we first measure the nullifiers and obtain syndromes $\xi_{2,q}-\xi_{1,q}$ and $\xi_{3,q}-\xi_{1,q}$. We use a maximum-likelihood decoder to perform error correction, obtaining an error-corrected quadrature $\vec{\xi}'$ in Eq.~\eqref{eq:post-EC-error}.
We then measure the GKP stabilizers, which returns $R_{\sqrt{\pi}}(\xi_{1,p}'+\xi_{2,p}'+\xi_{3,p}')$ and $R_{\sqrt{\pi}}(\xi_{1,q}')$ [where \(R\) is the remainder function in Eq.~\eqref{eq:remainder}]. We proceed to do maximum-likelihood error correction based on syndromes diagnosing the residual noise vector $\vec{\xi}'$ after the first layer of correction.

In scheme III, we first measure the auxiliary canonical-GKP stabilizers, obtaining the four syndromes
\begin{eqs}
z_{2,q}&=R_{\sqrt{2\pi}} (\xi_{2,q}-\xi_{1,q}),\\ z_{3,q}&=R_{\sqrt{2\pi}} (\xi_{3,q}-\xi_{1,q}),\\ z_{2,p}&=R_{\sqrt{2\pi}} (\xi_{2,p}),\\ z_{3,p}&=R_{\sqrt{2\pi}} (\xi_{3,p}).
\end{eqs}
After maximum-likelihood error correction (see Sec.~\ref{subsubsec:scheme3-dec}), the residual noise vector in Eq.~\eqref{eq:finalerror3} acting on the logical mode is 
\begin{eqs}
\xi^\prime_{1,q}&=z_{1,q}+\frac{1}{3}(z_{2,q}+z_{3,q})= \frac{\xi_{1,q}+\xi_{2,q}+\xi_{3,q}}{3}\\
\xi^\prime_{1,p}&=z_{1,p}-z_{2,p}-z_{3,p}=\xi_{1,p}~,
\end{eqs}
where we have assumed that all components are less than \(\sqrt{2\pi}\) in order to remove the remainder function \(R_{\sqrt{2\pi}}\).
If we further assume that each initial fluctuation $\xi$ is an identical and independent random variable with zero mean and variance $\sigma^2$, the variances of the above errors are
\begin{eqs}
\xi^\prime_{1,q}&\sim \mathcal{N}(0,\frac{\sigma^2}{3})\\
\xi^\prime_{1,p}&\sim \mathcal{N}(0,\sigma^2).
\end{eqs}
Scheme III proceeds to do logical GKP syndrome measurement and correction, which brings the total number of syndromes up to 6.

The numerical comparison for the three schemes is shown in Fig.~\ref{fig:repetition_513_numerics} \textcolor{black}{(a)}. Scheme II performs the worst likely because of a no-go theorem for the first layer \cite{terhal19} (see also\cite{eisert02,niset09}), which states that a linear mode-into-mode code defined by a set of nullifiers can only squeeze the quadrature error but can never reduce noise on both quadratures. 

For the noise standard deviation $\sigma \geq 0.18$, scheme III outperforms scheme I, demonstrating the advantage of the canonical-GKP stabilizer formalism for small-scale qubit codes. Moreover, scheme III requires less resources, using 6 syndromes in contrast to the 8 syndromes of scheme I. This crossover behavior around $\sigma=0.18$ is also observed in 5-qubit and 7-qubit repetition codes (not shown in the figure).

To complement our numerical simulations, we analytically calculate distributions of position and momentum displacement errors acting on the logical mode after error correction, following Ref. \cite{Noh_many_oscillator}. Based on logical error distributions, we analytically obtain the logical error rates of different schemes, as shown in the part (a) of Fig.~\ref{fig:repetition_513_numerics}. They are in good agreement with Monte Carlo results, and reveal an eventual crossover between performance of schemes I and III. The calculation details are collected in Appendix \ref{appendix:analytic-3-rep}.

\begin{figure}[h]
    \centering
    \includegraphics[width=0.45\textwidth]{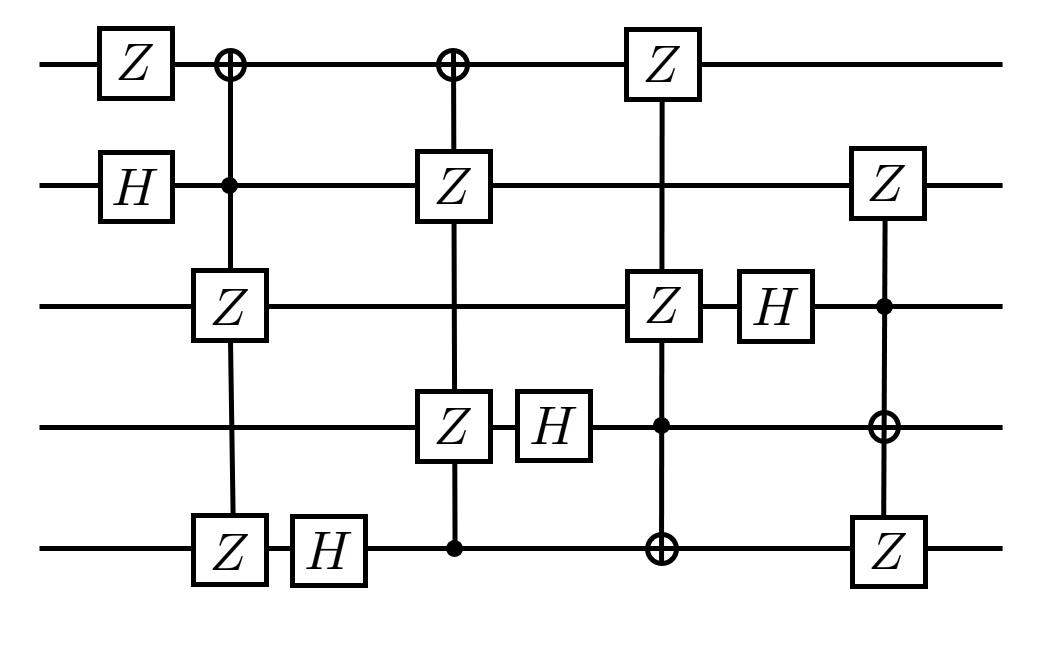}
    \caption{\textsc{Five-mode encoder \(\enc\) \cite{nakahara2008quantum}:} Here, $\text{CZ}_{i\rightarrow j}$ is chosen to be $H_j \text{CNOT}_{i \rightarrow j} H_j^\dagger$, and the top mode initially contains the logical information.}
    \label{fig:513_encoderv}
\end{figure}

\subsection{Five-qubit code based comparison}\label{sec:513}

The \(\llbracket 5,1,3\rrbracket \) qubit code \cite{laflamme_513} is the smallest qubit code to correct an arbitrary single-qubit Pauli error. Its continuous-variable version is studied in Ref.~\cite{braunstein98}.
Plugging in this code, the three schemes from Fig. \ref{fig:encoding-overview}(d) specialize to:
\begin{eqs}
\text{I. \(\llbracket 5,1,3\rrbracket \)-GKP:}\quad& \mathbb{Z}_2 \xrightarrow[]{\llbracket 5,1,3\rrbracket } \mathbb{Z}_2^{\otimes 5} \xrightarrow[]{\text{GKP}} \mathbb{R}^{\otimes 5}\\
\text{II. \(\llbracket 5,1,3\rrbracket \)-analog:}\quad& \mathbb{Z}_2 \xrightarrow[]{\text{GKP}} \mathbb{R} \xrightarrow[]{\llbracket 5,1,3\rrbracket _{\mathbb R}} \mathbb{R}^{\otimes 5}\\
\text{III. GKP-\(\llbracket 5,1,3\rrbracket \):}\quad& \mathbb{Z}_2 \xrightarrow[]{\text{GKP}} \mathbb{R} \xrightarrow[]{\text{GKP}-\llbracket 5,1,3\rrbracket } \mathbb{R}^{\otimes 5}.
\end{eqs}

For the error-correction part of scheme I, we first do error correction on each GKP qubit and then perform Pauli stabilizer measurements and error correction afterward.

For the error-correction part of scheme II, we first measure nullifiers, which are defined by the nullifier matrix $A_1$ [see Eq. (\ref{eq:A123})],
\begin{eqs}
A_1= \begin{bmatrix} -1&0 & -1 & 0 &0  &0  &0 &0 & 1&1\\
0& -1& 0& -1& 0& 1& 0& 0& 0& 1\\
0& -1& 0& 0& -1& 0& 0& 1& 1& 0\\
0& 0& -1& 0& -1& 1& 1& 0& 0& 0
\end{bmatrix}.\label{eq:513_nullifier}
\end{eqs}
The above matrix is equivalent to nullifiers given by qudit Pauli stabilizers \cite{chau97five}  $$\{Z^{-1}IZ^{-1}XX, XZ^{-1}IZ^{-1}X, IZ^{-1}XXZ^{-1}, XXZ^{-1}IZ^{-1}\}$$ 
(for $\mathbb{Z}_2$ qubits, $Z^{-1}$=$Z$).

In the second layer error correction of scheme II, we measure logical GKP stabilizers 
\begin{eqs}
\enc \hat{S}_q \enc^\dagger&= e^{i 2 \sqrt{\pi } (\hat{q}_1-\hat{q}_2-\hat{q}_3+\hat{q}_4-\hat{q}_5)}\\
\enc \hat{S}_p \enc^\dagger&= e^{-i 2 \sqrt{\pi } (\hat{p}_1+\hat{q}_3-\hat{q}_4)}~,
\end{eqs}
obtaining syndrome outcomes $R_{\sqrt{\pi}}(\xi_{1,q}'-\xi_{2,q}'-\xi_{3,q}'+\xi_{4,q}'-\xi_{5,q}')$ and $R_{\sqrt{\pi}} (\xi_{1,p}'+\xi_{3,q}'-\xi_{4,q}')$. The error corrected quadrature after the first layer is $\vec{\xi}'$ in Eq.~\eqref{eq:post-EC-error}.

For scheme III, if $\forall i\in \{1,2,...,n\},~z_{i,q}, z_{i,p} \ll \sqrt{2\pi}$, the error-corrected logical error quadratures after the first layer are  
\begin{eqs}
\xi^{\prime}_{1,q} &=\frac{1}{11}(5 \xi_{1,q}-2\xi_{1,p}-2\xi_{2,p}+3\xi_{3,p}+3\xi_{4,p}-2\xi_{5,p})\\
\xi^{\prime}_{1,p}&=\frac{1}{11}(-2 \xi_{1,q}+3\xi_{1,p}+3\xi_{2,p}+\xi_{3,p}+\xi_{4,p}+3\xi_{5,p}).
\end{eqs}
Ideally, $\xi^{\prime}_{1,q}\sim \mathcal{N}(0,\frac{5\sigma^2}{11})$, $\xi^{\prime}_{1,p}\sim \mathcal{N}(0,\frac{3\sigma^2}{11})$ at unambiguously distinguishable regime.

In numerical simulation (see Fig.~\ref{fig:repetition_513_numerics}(b)), we perform a Monte Carlo simulation (see Appendix.~\ref{appendix:numerics}) to study the logical error rate of three schemes. The result shows that scheme I performs the best while scheme II has the highest logical error rate in this regime.

We believe the difference in performance between schemes I and III is due to the following. Scheme I performs GKP-type modular quadrature correction on each mode first, which reduces individual quadrature noise before it is passed onto the next layer of correction. Scheme III, on the other hand, mixes said quadrature noise via \(\enc\) and implements modular correction on the \(n-k\) auxiliary modes only \textit{after} mixing. The mixing redistributes the initially uniform noise in an asymmetric way amongst the modes, opening up the possibility for noise from several quadratures to concentrate on one output mode, thereby increasing the resulting noise variance. This is not an issue if the variance of the resulting additive noise is much smaller than $\sqrt{2\pi}$, in which case the syndrome $z \mod \sqrt{2\pi}$ can be approximated by $z$. However, once the variance of $z$ is comparable to $\sqrt{2\pi}$, then $z \mod \sqrt{2\pi}$ can be quite different, leading to uncorrectable errors.

\subsection{Numerical simulation of Shor and Steane codes}\label{sec:shor_steane}

In this section, we use the same numerical method to study the error rates of the$\llbracket 9,1,3\rrbracket$ (Shor) code \cite{shor_code} and the $\llbracket 7,1,3 \rrbracket$ (Steane) code \cite{steane_code}. Both Shor and Steane codes can correct arbitrary single-qubit Pauli errors.

The numerical result is shown in Fig.~\ref{fig:shor_steane}. Scheme III performs the worst in both cases, compared to schemes I and II. This numerical result \textcolor{black}{is expected, according to the} conjecture on the ``error concentration" issue of GKP-stabilizer codes.

We first compare the performance of scheme III \textcolor{black}{with} the $\llbracket 7,1,3\rrbracket$ and $\l\llbracket 5,1,3 \rrbracket$ codes, because both of them only involve weight-4 stabilizers and can correct arbitrary single-qubit Pauli errors. We find that the logical error rate of scheme III are similar in $\llbracket 7,1,3 \rrbracket$ and $\llbracket 5,1,3 \rrbracket$ codes. 
\textcolor{black}{This} suggests that versions of scheme III with DV codes having similar stabilizer weight and distance will have similar logical error rate. 


Furthermore, we compare the logical error rate of scheme III \textcolor{black}{with the} $\llbracket 7,1,3\rrbracket$ and $\llbracket 9,1,3 \rrbracket$ codes. We find that scheme III has a lower logical error rate in $\llbracket 7,1,3\rrbracket$ codes than in the $\llbracket 9,1,3 \rrbracket$ code. Both $\llbracket 7,1,3\rrbracket$ and $\llbracket 9,1,3 \rrbracket$ can correct arbitrary single-qubit Pauli error. However, the maximum stabilizer weight of $\llbracket 7,1,3\rrbracket$ code is 4 while the maximum stabilizer weight of $\llbracket 9,1,3\rrbracket$ is 6. The correlation between stabilizer weights and logical error rates also consolidates our hypothesis of ``error concentration" that the performance of scheme III will be affected by high-weight stabilizers.

The logical error rates \textcolor{black}{for the} $\llbracket 5,1,3 \rrbracket$, $\llbracket 7,1,3 \rrbracket$, and $\llbracket 9,1,3 \rrbracket$ codes \textcolor{black}{using scheme III} are shown in Table.~\ref{table:distance-3_codes}. We found the logical error rates of scheme III $\llbracket 5,1,3 \rrbracket$ and $\llbracket 7,1,3 \rrbracket$ are in the same scale while the logical error rate of $\llbracket 9,1,3 \rrbracket$ is much greater than the other two codes. This fact also corroborates our previous error-concentration hypothesis.

\begin{table}[h]
\label{table:distance-3_codes}
\begin{tabular}{ |p{1cm}||p{2cm}|p{2cm}|p{2cm}|  }
 \hline
 $\sigma$ & $\llbracket 5,1,3 \rrbracket$ code & $\llbracket 7,1,3 \rrbracket$ code & $\llbracket 9,1,3 \rrbracket$ code\\
 \hline
 
0.15   & $7.410\times 10^{-5}$    & $1.810\times 10^{-5}$&   $3.765\times 10^{-4}  $\\
0.16   & $2.207\times 10^{-4}$    & $5.820\times 10^{-5}$&   $8.211 \times 10^{-4}  $\\
0.17   & $5.507\times 10^{-4}$    & $1.544\times 10^{-4}$&   $1.659 \times 10^{-3}  $\\
0.18   & $1.233\times 10^{-3}$    & $3.525\times 10^{-4}$&   $2.954 \times 10^{-3}  $\\
0.19   & $2.431\times 10^{-3}$    & $7.392\times 10^{-4}$&   $4.770\times 10^{-3}  $\\
0.20   & $4.393\times 10^{-3}$    & $1.374\times 10^{-3}$&   $7.319\times 10^{-3}$\\

 \hline
\end{tabular}
\caption{Scheme III yields comparable logical error rates in three different instances where a distance-three code is used.  Here we choose $0.15 \leq \sigma \leq 0.20$ where the logical error rates are not fully saturated in all three codes. }
\end{table}

\begin{figure}
    \centering
    \includegraphics[width=0.45\textwidth]{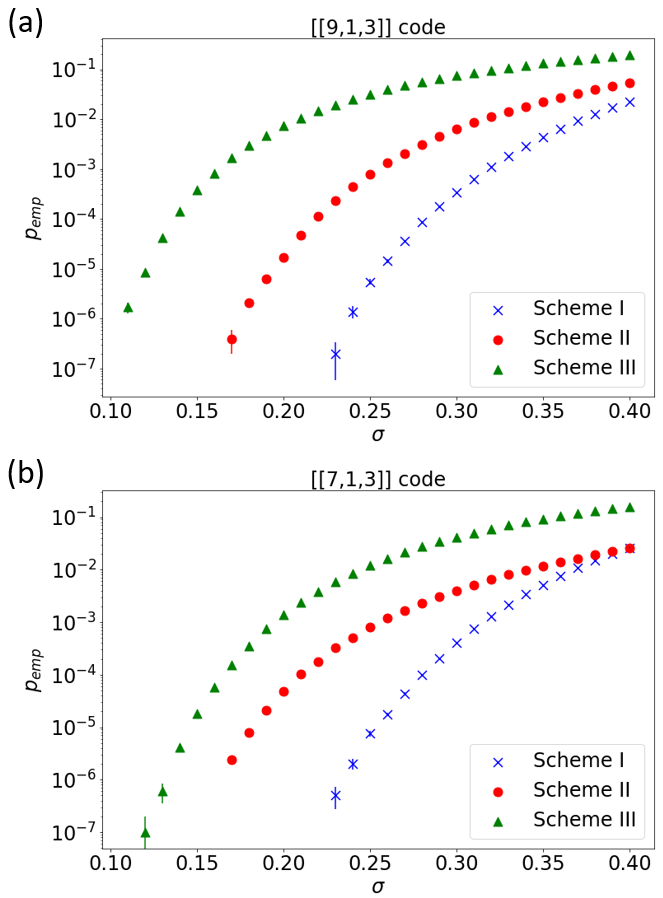}
    \caption{Numerical simulations of Shor and Steane code. The horizontal axis $\sigma$ represents the variance of the Gaussian displacement channel as discussed in Sec.~\ref{sec: displacement_error}. The vertical axis $p_{\text{emp}}$ represents the empirical logical error rates that are numerically calculated by the method in Appendix~\ref{appendix:numerics}. Each data point is obtained by averaging $10^7$ samples. }
    \label{fig:shor_steane}
\end{figure}

\section{UNBIASED GKP-REPETITION CODES}\label{sec:unbiased_code}

\begin{figure}[t]
    \raggedright
    \includegraphics[width=0.45\textwidth]{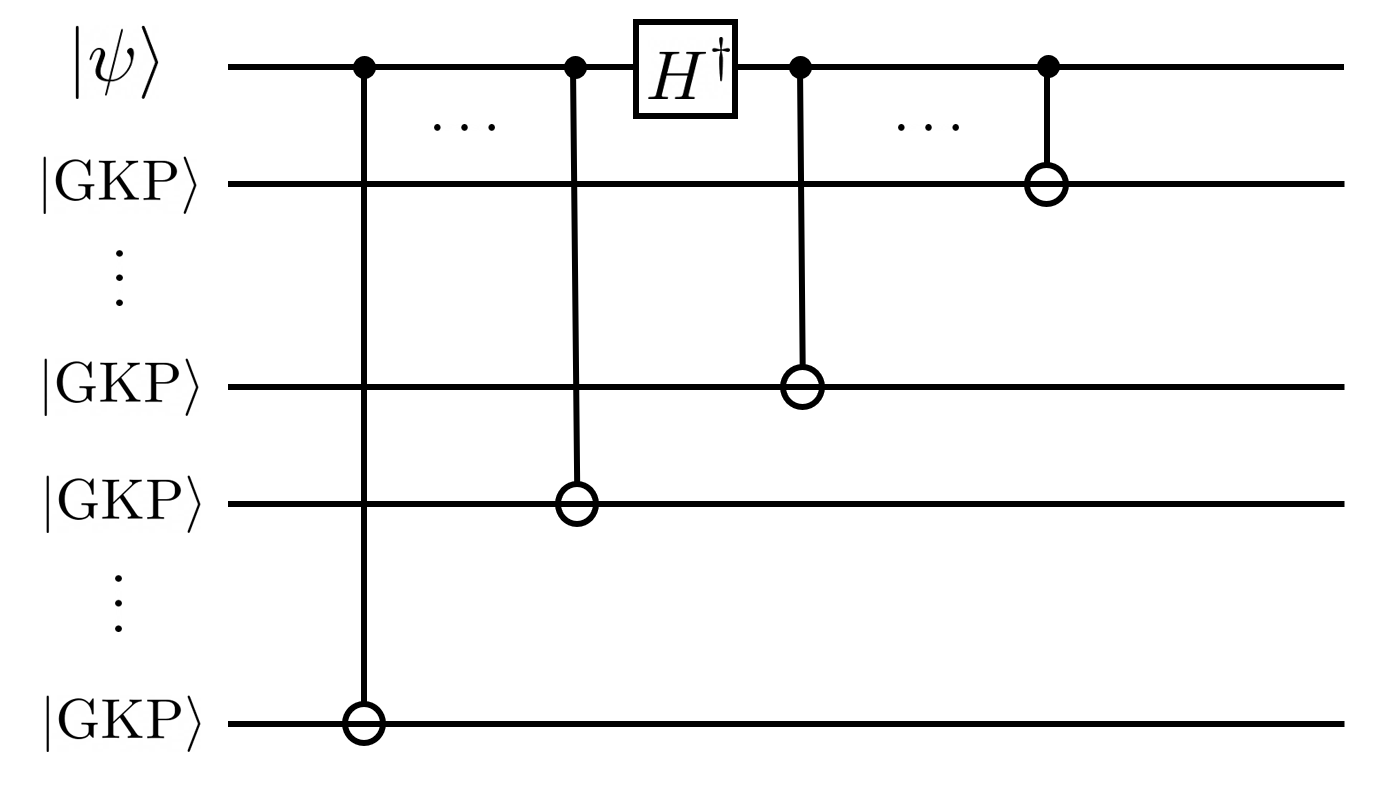}
    \caption{\textsc{Unbiased GKP-repetition encoder \(\enc\):} The controlled-$\bigominus$ denotes inverse-SUM gate.}
    \label{fig:unbiased_repetition_ecoder}
\end{figure}

We propose an unbiased GKP-repetition code that uses $2n$ auxiliary modes to simultaneously suppress the variance of both position and momentum errors of a single logical mode without extra quadrature squeezing. The code requires $2n+1$ modes
and allows the initial noise variance $\sigma^2$ to be suppressed to $\frac{\sigma^2}{n+1}$. Previous work [8] either requires squeezing to suppress
both quadratures, or achieves similar suppression in only one quadrature via the GKP-repetition code.

The encoding circuit \(\enc\) for this code is shown in Fig.~\ref{fig:unbiased_repetition_ecoder}, and we proceed to prove our stated claims using the formalism of Ref. \cite{Noh_many_oscillator}. 
We do not concatenate this code with a qubit code as was done in the formalism of Sec. \ref{sec:EC_dec}.

The main idea of canonical GKP-repetition codes is to propagate the position error $\xi_{0,q}$ to the position syndrome of auxiliary modes and then perform the maximum-likelihood estimation. Applying it to our case, we label the logical mode by 0; the rest of the \(2n\) modes are auxiliary modes.

Error syndromes $\vec{z}$ after the decoding circuit from Fig.~\ref{fig:unbiased_repetition_ecoder} can be written as
\begin{eqs}
\begin{cases}
&z_{0,q}=R_{\sqrt{\pi}}(\sum_{k=1}^{n}\xi_{k,p}-\xi_{0,p})\\
&z_{0,p}=R_{\sqrt{\pi}}(\xi_{0,q}-\sum_{k=n+1}^{2n}\xi_{k,p})\\
&z_{j,q}=R_{\sqrt{2\pi}}(\xi_{j,q}+\xi_{0,q}), \quad\quad\quad\quad\quad\quad 1\leq j \leq n\\
&z_{l,q}=R_{\sqrt{2\pi}}(\xi_{l,q}+\sum_{k=1}^n\xi_{k,p}-\xi_{0,p}),\quad n+1\leq l \leq 2n\\
&z_{s,p}=R_{\sqrt{2\pi}}(\xi_{s,p}), 1\leq s \leq 2n~.
\end{cases}
\end{eqs}
The logical Pauli operators for this code \textcolor{black}{are} 
\begin{eqs}\label{eq: stabilizer_unbiased}
\overline{X}&=\exp\Big(-i\sqrt{\pi}(\hat{q}_0-\sum_{k=n+1}^{2n}\hat{p}_k)\Big),\\
\overline{Z}&=\exp\Big(i\sqrt{\pi}(-\hat{p}_0+\sum_{k=1}^{n}\hat{p}_k)\Big).
\end{eqs}
The stabilizers are
\begin{eqs}
\begin{cases}
\hat{S}_{0,q}&=\overline{Z}^2=\exp\Big(-2i\sqrt{\pi}(\hat{q}_0-\sum_{k=n+1}^{2n}\hat{p}_k)\Big) \\
\hat{S}_{0,p}&=\overline{X}^2=\exp\Big(2i\sqrt{\pi}(-\hat{p}_0+\sum_{k=1}^{n}\hat{p}_k)\Big)\\
\hat{S}_{j,q}&=\exp\Big(i\sqrt{2\pi}(\hat{q}_j+\hat{q}_0)\Big), \qquad\qquad\; 1\leq j\leq n\\
\hat{S}_{l,q}&= \exp\Big(i\sqrt{2\pi}(\hat{q}_l+\sum_{k=1}^n\hat{p}_k-\hat{p}_0)\Big), n+1\leq j\leq 2n\\
\hat{S}_{s,p}&= \exp\Big(-i\sqrt{2\pi}\hat{p}_s\Big), \qquad\qquad\qquad 1\leq s \leq 2n
\end{cases}.
\end{eqs}
Here, $\hat{S}_{0,q}, \hat{S}_{0,p}$ are the GKP \textcolor{black}{stabilizers} of logical modes that will be used in the \textcolor{black}{second} layer \textcolor{black}{of} error correction in scheme III. The last three lines of Eq.~\eqref{eq: stabilizer_unbiased} are the canonical GKP stabilizers used in the \textcolor{black}{first} layer \textcolor{black}{of} error correction.
Similar to the canonical GKP-repetition codes, we use maximum-likelihood to estimate the most probable quadrature error $\vec{\xi}_*$ compatible with a given measurement result $(z_{1,q},~...,~z_{2n,q},~z_{1,p},~...,~z_{2n,p})$ such that $|\vec{\xi}|^2$ is minimized. Since the variance of each quadrature noise is proportional to $n$, the small error regime to allow for approximating $R_{\sqrt{2\pi}}(z)=z$ is $\sqrt{n}\sigma \ll \sqrt{2\pi}$.

We can write $|\vec{\xi}|^2$ in terms of $\vec{z}$,
\begin{eqs}
|\vec{\xi}|^2&=(z_{0,p}+\sum_{k=n+1}^{2n}z_{k,p})^2+\sum_{k=1}^{2n}(z_{k,p})^2\\
&+(\sum_{k=1}^n z_{k,p}-z_{0,q})^2+\sum_{k=1}^n (z_{k,q}-z_{0,p}-\sum_{j=n+1}^{2n} z_{j,p})^2\\
&+\sum_{k=n+1}^{2n} (z_{k,q}- z_{0,q})^2.
\end{eqs}
Following the maximum-likelihood error correction of layer one of scheme III (see Sec.~\ref{subsubsec:scheme3-dec}), we obtain the most probable errors on both quadratures
\begin{eqs}
\overline{z}_{0,q}&=\frac{1}{n+1}(\sum_{k=1}^n z_{k,p}+\sum_{k=n+1}^{2n} z_{k,q}),\\
\overline{z}_{0,p}&=\frac{-1}{n+1} \Big[(n+1)\sum_{k=n+1}^{2n} z_{k,p} -\sum_{k=1}^n z_{k,q} \Big],
\end{eqs}
that acts on the logical mode after decoding. The above equation is equivalent to $\begin{pmatrix} \overline{z}_{0,q}\\ \overline{z}_{0,p} \end{pmatrix} =A_2 \vec{\xi}'$.

Finally, we perform error correction by subtracting the actual quadrature error by the estimated values. Finally, the residual noise vector on position and momentum quadratures is
\begin{eqs}
\vec{\xi}_{\text{final},l}= A_2 (\vec{\xi}-\vec{\xi}_*)= \begin{pmatrix}z_{0,q}-\overline{z}_{0,q}\\
z_{0,p}-\overline{z}_{0,p}\end{pmatrix},
\end{eqs}
where
\begin{eqs}
z_{0,q}-\overline{z}_{0,q}&=\frac{1}{n+1}\Big(\xi_{0,p}- \sum_{k=n+1}^{2n} \xi_{k,q} \Big) \sim \mathcal{N}(0,\frac{\sigma^2}{n+1}),\\
z_{0,p}-\overline{z}_{0,p}&=\frac{1}{n+1} (\xi_{0,q}-\sum_{k=1}^n \xi_{k,q}) \sim \mathcal{N}(0,\frac{\sigma^2}{n+1}).
\end{eqs}

This construction can simultaneously suppress logical position and momentum error without extra squeezing. However, unlike the original GKP-repetition code, which has a syndrome with constant variance for a given variance of physical noise $\sigma^2$, the unbiased GKP-repetition
code has a syndrome quadrature whose variance scales as $n$, meaning that at most a linear suppression of logical errors. The construction of a family of canonical GKP-stabilizer codes that can quadratically suppress both logical position and momentum displacement error with syndrome independent of $n$ without squeezing is an interesting open question.

\section{Summary \& discussion}
$$
\begin{array}{c}
\text{\textit{It's not about the code, it's about the decoder.}}\\
\quad\quad\quad\quad\quad\quad\quad\quad\quad\quad\quad\quad\text{-- error-correction lore}
\end{array}
$$
We study and benchmark three schemes encoding qubits into harmonic oscillators utilizing concatenations of various qubit and bosonic encodings. While encoding circuits for all schemes follow a similar pattern, the decoders and error-correcting performance of the schemes are substantially different, resonating with the above quote. 

A key motivation for this work was to gauge the usefulness of the recently discovered GKP-stabilizer codes \cite{Noh_many_oscillator} for encoding discrete-variable information. 
We concatenate these mode-into-mode codes with bona-fide qubit-into-mode GKP codes \cite{gkp} in the third of the three concatenation schemes considered in this work. 
Previous research on the decoding of GKP-stabilizer codes has been conducted on a case-by-case basis. 
In this study, we introduce a formalism for a maximum-likelihood decoder that allows for efficient calculation of the performance of general GKP-stabilizer codes. 
Our approach also serves as a recipe for conducting Monte Carlo simulations, and we provide illustrative examples of this.

We find that the performance of GKP-stabilizer concatenated codes can vary greatly. 

In the case when three concatenation schemes utilize derivatives of the repetition code, GKP-stabilizer codes outperform the two conventional schemes.
The performance of all three schemes does not depend very much on the (classical) distance of the (bit-flip) repetition code since, in all cases, the logical error rate contributions are dominated by phase-flip errors. 

In cases when three concatenation schemes utilize derivatives of other codes, such as the $\llbracket 5,1,3\rrbracket $, $\llbracket 7,1,3\rrbracket $, or $\llbracket 9,1,3\rrbracket $ codes, the GKP-stabilizer scheme comes in either second or third place.
We conjecture that GKP-stabilizer scheme performance is very sensitive to the weight of the stabilizers of the underlying code.
GKP-stabilizer error correction was originally designed for the limit of small Gaussian fluctuations.
Larger-weight stabilizers collect fluctuations coming from more modes, concentrating noise in a way that this limit no longer applies. 
In such cases, we thus expect (and observe) a degradation in the performance of GKP-stabilizer error correction.

In the process of benchmarking our schemes, we recast maximum-likelihood error-correction against quadrature fluctuations to a statistical estimation problem for GKP codes, GKP-stabilizer codes, and analog stabilizer codes. We believe that the resulting statistical inference problem is related to compressed sensing. Since compressed sensing has been studied in the context of quantum tomography \cite{david10compressed}, it may be interesting to investigate the underlying connections between our schemes and that work.

Our schemes all rely on the use of GKP states, which we set to be those associated with a square lattice in a mode's phase space. However, given that our decoding processes distribute the initially uniform displacement noise in a lopsided manner among the modes, there is good reason to consider initializing auxiliary modes to GKP states associated with rectangular lattices. The shape of the lattice can be added as an extra parameter in the statistical estimation problem, potentially yielding a more effective decoder.

Our results \textcolor{black}{are} applicable to the finite-energy GKP states in a realistic setting by introducing another Gaussian noise channel to the initialization stage, because the finite-energy GKP state is usually implemented by applying an envelope operator $\exp(-\Delta \hat{n})$ to an ideal GKP state \cite{cahill1969expansion,noh2020fault,nick14fault}. For $\Delta\ll 1$ regime, we can expand the envelope operator in terms of a Gaussian displacement channel.

Towards experimental realizations, the three major obstacles are 1. GKP state preparation, 2. syndrome measurements of GKP stabilizers, 3. two-mode Gaussian operations. The most promising platforms for realizing GKP error correction are microwave cavities, optical systems, and phononic modes in ion traps.


Multi-mode Gaussian operations have long been realized in the optical domain \cite{Furusawa_Sum2008}, with the primary remaining difficulty being the preparation of GKP states. 
For microwave cavity systems, GKP state preparation and syndrome measurement have recently been implemented  \cite{campagne2020quantum, break_even2022}, and two-cavity Gaussian operations have been demonstrated in other applications of cavity and circuit-QED \cite{pfaff2017controlled,gao2019entanglement,wang2020efficient,grimm2019kerr,roy2016introduction}. 
For ion trap systems, the GKP state preparation and syndrome measurement have also been simultaneously realized \cite{fluhmann18sequential, fluhmann2019encoding, de2022error}.
Recently, two-mode Gaussian operations have been achieved as well \cite{chen2023scalable}.
Overall, with these recent advances, it seems very likely that multimode GKP codes will be implemented in these technologies in the near future.

We have benchmarked our concatenated schemes against displacement noise only, which is just the tip of the iceberg.
Aside from displacement error, loss and dephasing errors are also prevalent in physical systems. 
While GKP and GKP-stabilizer codes have been developed with translational-type displacement noise in mind, it may \textcolor{black}{be} interesting to generalize our schemes to work against dephasing errors using auxiliary modes in rotation-symmetric states such as number-phase states or cat states \cite{grimsmo_rotation}. 
Another direction would be to adapt our schemes to biased noise \cite{tuckett18ultrahigh,tuckett19tailor,tuckett20ft,dua22clifford,xu22tailor}, utilizing squeezing and/or highly deformed GKP lattices \cite{Hanggli_biased,conrad2022gottesman,wu2022optimal}.

\begin{acknowledgements}
We thank Kyungjoo Noh, Mohammad Hafezi, and  Anthony J. Brady for helpful discussions, as well as Henry Pfister for passing on error-correction lore in the form of the quote mentioned at the end of this manuscript. Y.X. and E.-J.K. are supported by ARO W911NF-15-1-0397, National Science Foundation QLCI grant OMA-2120757, AFOSR-MURI FA9550-19-1-0399, Department of Energy QSA program. Y.W. is supported by the Air Force Office of Scientific Research under award number FA9550-19-1-0360.
Y.X. thanks Michael Gullans and Alexander Barg for teaching the classical and quantum error correction courses. Y.X. would like to thank Yilun Li and Yujie Zhang for their mental support during the COVID-19 pandemic. V.V.A. thanks Olga Albert and Ryhor Kandratsenia for providing daycare support throughout this work.
\end{acknowledgements}

\appendix

\section{Analytical expression of logical error rates in 3-qubit repetition code}\label{appendix:analytic-3-rep}

In this appendix, we give analytical expressions for the logical error rates of the 3-qubit repetition code in different schemes. Here we assume the position and the momentum displacement errors on each oscillator obey the same Gaussian distribution $p_{\sigma}(\xi)=(2\pi\sigma^2)^{-1/2}\exp{(-\xi^2/2\sigma^2)}$.

For the $\llbracket 5,1,3\rrbracket $ code, calculations can be performed using the same analysis. They are more complicated in expression as more modes are involved and position momentum errors are mixed, so we will not present the result.

\subsection{Scheme I}
Scheme I is analogous to the usual stabilizer code. We first calculate the error rate of $p$ and $q$ on each physical GKP qubit, then calculate the logical error rate.

For a single physical qubit, if the displacement error $\xi_{p,q_i}\in[(2n-1/2)\sqrt{\pi},(2n+1/2)\sqrt{\pi})$, it can be corrected without introducing an error. As we suppose the position and momentum error obey the same Gaussian distribution, the success probability is
\begin{equation}
    p_{0}=\sum_{n\in\mathbb{Z}}\int^{(2n+1/2)\sqrt{\pi}}_{(2n-1/2)\sqrt{\pi}}d\xi_{p,q_i}p_{\sigma}(\xi_{p,q_i}).
\end{equation}

The 3-qubit repetition code can correct at most one $X$ error. This means only when there is no $q$ error on any physical GKP and at most one $p$ error is there no logical error. So the logical error rate is
\begin{equation}
    \text{Prob}_{\text{1}}(\text{logical error})=1-(p_0^3+3p_0(1-p_0)^2)(p_0^3+3p_0^2(1-p_0)).
\end{equation}

\subsection{Scheme II}
After decoding, the logical information is \textcolor{black}{in} the first oscillator mode. So we can make use of Eq. \eqref{scheme2finalerror} to calculate the final logical error distribution in terms of the errors on the physical qubits.
\begin{eqs}
\xi_{q}&=\frac 13(\xi_{q_1}+\xi_{q_2}+\xi_{q_3})-R_{\sqrt{\pi}}(\frac 13(\xi_{q_1}+\xi_{q_2}+\xi_{q_3}))\\
\xi_{p}&=\xi_{p_1}+\xi_{p_2}+\xi_{p_3}-R_{\sqrt{\pi}}(\xi_{p_1}+\xi_{p_2}+\xi_{p_3}).
\end{eqs}
It is useful to make the variable change $x_{p/q}=\frac 13(\xi_{p_1/q_1}+\xi_{p_2/q_2}+\xi_{p_3/q_3})$, $y_{p/q}=\xi_{p_2/q_2}-\xi_{p_1/q_1}$, $z_{p/q}=\xi_{p_3/q_3}-\xi_{p_1/q_1}$. The probability density function of logical quadrature noise $\xi_q$ and $\xi_p$ are 
\begin{widetext}
\begin{eqs}
Q_2(\xi_q)&=\int^{\infty}_{-\infty} d\xi_{q_1}\int^{\infty}_{-\infty} d\xi_{q_2}\int^{\infty}_{-\infty} d\xi_{q_3}p_{\sigma}(\xi_{q_1})p_{\sigma}(\xi_{q_2})p_{\sigma}(\xi_{q_3})\delta(\xi_{q}-\frac 13(\xi_{q_1}+\xi_{q_2}+\xi_{q_3})+R_{\sqrt{\pi}}(\frac 13(\xi_{q_1}+\xi_{q_2}+\xi_{q_3})))\\
&=\int^{\infty}_{-\infty} d x_q\int^{\infty}_{-\infty} dy_q\int^{\infty}_{-\infty} dz_q~ p_{\sigma}(x_q-\frac {y_q}{3}-\frac {z_q}{3})p_{\sigma}(x_q+\frac {2y_q}{3}-\frac {z_q}{3})p_{\sigma}(x_q-\frac {y_q}3+\frac {2z_q}{3})\delta(\xi_{q}-x_q+R_{\sqrt{\pi}}(x_q))\\
&=\sum_{n\in\mathbb{Z}}\int^{(n+\frac 12)\sqrt{\pi}}_{(n-\frac 12)\sqrt{\pi}} d x_q\int^{\infty}_{-\infty} dy_q\int^{\infty}_{-\infty} dz_q~ p_{\sigma}(x_q-\frac {y_q}{3}-\frac {z_q}{3})p_{\sigma}(x_q+\frac {2y_q}{3}-\frac {z_q}{3})p_{\sigma}(x_q-\frac {y_q}3+\frac {2z_q}{3})\delta(\xi_{q}+n\sqrt{\pi}),\\
P_2(\xi_p)&=\sum_{n\in\mathbb{Z}}\int^{\frac 13(n+\frac 12)\sqrt{\pi}}_{\frac 13(n-\frac 12)\sqrt{\pi}} d x_p\int^{\infty}_{-\infty} dy_p\int^{\infty}_{-\infty} dz_p~ p_{\sigma}(x_p-\frac {y_p}{3}-\frac {z_p}{3})p_{\sigma}(x_p+\frac {2y_p}{3}-\frac {z_p}{3})p_{\sigma}(x_p-\frac {y_p}3+\frac {2z_p}{3})\delta(\xi_{p}+n\sqrt{\pi}).
\end{eqs}
This means after error correction and decoding, the logical errors taks discrete values of $n\sqrt{\pi},~ n\in\mathbb{Z}$. When $n$ takes even numbers, there is no error. so 
\begin{eqs}
\text{Prob}_{\text{2}}(q ~\text{is correct})=&\sum_{n\in\mathbb{Z}}\int^{(2n+\frac 12)\sqrt{\pi}}_{(2n-\frac 12)\sqrt{\pi}} d x_q\int^{\infty}_{-\infty} dy_q\int^{\infty}_{-\infty} dz_q~ p_{\sigma}(x_q-\frac {y_q}{3}-\frac {z_q}{3})p_{\sigma}(x_q+\frac {2y_q}{3}-\frac {z_q}{3})p_{\sigma}(x_q-\frac {y_q}3+\frac {2z_q}{3}),\\
\text{Prob}_{\text{2}}(p~\text{is correct})=&\sum_{n\in\mathbb{Z}}\int^{\frac 13(2n+\frac 12)\sqrt{\pi}}_{\frac 13(2n-\frac 12)\sqrt{\pi}} d x_p\int^{\infty}_{-\infty} dy_p\int^{\infty}_{-\infty} dz_p~ p_{\sigma}(x_p-\frac {y_p}{3}-\frac {z_p}{3})p_{\sigma}(x_p+\frac {2y_p}{3}-\frac {z_p}{3})p_{\sigma}(x_p-\frac {y_p}3+\frac {2z_p}{3}).
\end{eqs}

The final logical error rate is
\begin{equation}
   \text{Prob}_{\text{2}}(\text{logical error})=1-\text{Prob}_{\text{2}}(p~\text{is correct})\text{Prob}_{\text{2}}(q~\text{is  correct}).
\end{equation}

\subsection{Scheme III}
Similar to scheme II, we make use of Eq. \eqref{eq:finalerror3} to write the logical error as
\begin{eqs}
\xi_{q}&=\xi_{q_1}+\frac 13 (R_{\sqrt{2\pi}}(\xi_{q_2}-\xi_{q_1})+R_{\sqrt{2\pi}}(\xi_{q_3}-\xi_{q_1})),\\
\xi_{p}&=\xi_{p_1}+\xi_{p_2}+\xi_{p_3}-R_{\sqrt{2\pi}}(\xi_{p_2})-R_{\sqrt{2\pi}}(\xi_{p_3}).
\end{eqs}

The probability density function of logical quadrature noise $\xi_q$ and $\xi_p$ are

\begin{eqs}
Q_3(\xi_q)=&\int^{\infty}_{-\infty} d\xi_{q_1}\int^{\infty}_{-\infty} d\xi_{q_2}\int^{\infty}_{-\infty} d\xi_{q_3}p_{\sigma}(\xi_{q_1})p_{\sigma}(\xi_{q_2})p_{\sigma}(\xi_{q_3})\delta(\xi_{q}-\xi_{q_1}-\frac 13 (R_{\sqrt{2\pi}}(\xi_{q_2}-\xi_{q_1})+R_{\sqrt{2\pi}}(\xi_{q_3}-\xi_{q_1})))\\
=&\int^{\infty}_{-\infty} d x_q\int^{\infty}_{-\infty} dy_q\int^{\infty}_{-\infty} dz_q~ p_{\sigma}(x_q-\frac {y_q}{3}-\frac {z_q}{3})p_{\sigma}(x_q+\frac {2y_q}{3}-\frac {z_q}{3})p_{\sigma}(x_q-\frac {y_q}3+\frac {2z_q}{3})\\&\delta(\xi_{q}-x_q-\frac 13 (y_q-R_{\sqrt{2\pi}}(y_q)+z_q-R_{2\sqrt{\pi}}(z_q)))\\
=&\sum_{n_y\in\mathbb{Z}}\sum_{n_z\in\mathbb{Z}}\int^{(n_y+\frac 12)\sqrt{2\pi}}_{(n_y-\frac 12)\sqrt{2\pi}} d y_q\int^{(n_z+\frac 12)\sqrt{2\pi}}_{(n_z-\frac 12)\sqrt{2\pi}} d z_q~ p_{\sigma}(\xi_q-\frac{\sqrt{2\pi}}{3}(n_y+n_z)-\frac {y_q}{3}-\frac {z_q}{3})p_{\sigma}(\xi_q-\frac{\sqrt{2\pi}}{3}(n_y+n_z)+\frac {2y_q}{3}-\frac {z_q}{3})\\&p_{\sigma}(\xi_q-\frac{\sqrt{2\pi}}{3}(n_y+n_z)-\frac {y_q}3+\frac {2z_q}{3}),\\
P_3(\xi_p)=&\sum_{n_y\in\mathbb{Z}}\sum_{n_z\in\mathbb{Z}}\int^{(n_y+\frac 12)\sqrt{2\pi}}_{(n_y-\frac 12)\sqrt{2\pi}} d \xi_{p_2}\int^{(n_z+\frac 12)\sqrt{2\pi}}_{(n_z-\frac 12)\sqrt{2\pi}} d \xi_{p_3}~ p_{\sigma}(\xi_{p_2})p_{\sigma}(\xi_{p_3})p_{\sigma}(\xi_p-\sqrt{2\pi}(n_y+n_z)).
\end{eqs}
\end{widetext}
With the logical error distribution, one can calculate 
\begin{eqs}
&\text{Prob}_{\text{3}}(q~\text{is correct})=\sum_{n\in\mathbb{Z}}\int^{(2n+\frac 12)\sqrt{\pi}}_{(2n-\frac 12)\sqrt{\pi}}d\xi_q~Q_3(\xi_{q}),\\
&\text{Prob}_{\text{3}}(p~\text{is correct})=\sum_{n\in\mathbb{Z}}\int^{(2n+\frac 12)\sqrt{\pi}}_{(2n-\frac 12)\sqrt{\pi}}d\xi_p~P_3(\xi_{p}),\\
&\text{Prob}_{\text{3}}(\text{logical error})=1-\text{Prob}_{\text{3}}(p~\text{is correct})\text{Prob}_{\text{3}}(q~\text{is correct}).
\end{eqs}

\section{Analysis on the logical GKP stabilizer error correction}\label{appendix:logicalGKPEC}

In this section we show that if the logical GKP stabilizer error correction is the last layer, it will not change the final error rate. 

We first establish the following lemma of integral equivalence.

\begin{lemma}[Equivalence of integrals]\label{lemma:equivalence}
Let $f:V_{2n}\to V_{2n}$ be a function such that $\vec{\eta}=\vec{f}(\vec{\xi})$. Let $\eta_l$ be its component and we write $\eta_l=f_l(\vec{\xi})$. Let $d^{2n}\xi\equiv d\xi_{1}...d\xi_{2n}$ be the integration measure and $\mathfrak{p}(\vec{\xi})$ be the probability distribution of
the vector $\vec{\xi}$. Then $ \forall n\in\mathbb{Z}$, the following two integrals are equivalent:
\begin{eqs}\label{eq:lemmaintegralequiv}
&\int_{(n-1/2)\sqrt{\pi}}^{(n+1/2)\sqrt{\pi}}d\eta_l \int d^{2n}\xi\, p(\vec{\xi})\delta(\eta_l-f_l(\vec{\xi}))\\
=&\int_{(n-1/2)\sqrt{\pi}}^{(n+1/2)\sqrt{\pi}}d\eta_l \int d^{2n}\xi\, p(\vec{\xi})\delta(\eta_l-f_l(\vec{\xi})+R_{\sqrt{\pi}}(f_l(\vec{\xi}))).
\end{eqs}
\end{lemma}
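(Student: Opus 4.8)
The plan is to integrate out $\eta_l$ on both sides of Eq.~\eqref{eq:lemmaintegralequiv} using the Dirac delta, and then to show that the two resulting restrictions on the domain of $\vec{\xi}$ coincide. First I would collapse the $\eta_l$-integral on the left-hand side: the delta sets $\eta_l = f_l(\vec{\xi})$, and the outer integral over the window $[(n-\tfrac{1}{2})\sqrt{\pi},(n+\tfrac{1}{2})\sqrt{\pi})$ retains only those $\vec{\xi}$ for which $f_l(\vec{\xi})$ lands in that window, giving
\[
\int d^{2n}\xi\, p(\vec{\xi})\,\mathbf{1}\!\big[\,f_l(\vec{\xi}) \in [(n-\tfrac{1}{2})\sqrt{\pi},(n+\tfrac{1}{2})\sqrt{\pi})\,\big],
\]
where $\mathbf{1}[\cdot]$ is the indicator of the stated event. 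By the definition of the remainder function in Eq.~\eqref{eq:remainder}, the nearest-integer rounding obeys $[f_l/\sqrt{\pi}]=n$ exactly on this window, so the left-hand side equals the total probability of the event $[f_l/\sqrt{\pi}]=n$.

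Next I would treat the right-hand side using the identity $f_l - R_{\sqrt{\pi}}(f_l) = \sqrt{\pi}\,[f_l/\sqrt{\pi}]$, which is immediate from Eq.~\eqref{eq:remainder}. Thus the argument of the delta on the right forces $\eta_l$ to equal $\sqrt{\pi}\,[f_l/\sqrt{\pi}]$, the nearest integer multiple of $\sqrt{\pi}$ to $f_l$. Collapsing the $\eta_l$-integral then keeps only those $\vec{\xi}$ for which this multiple lies inside the window. Because $\sqrt{\pi}\,[f_l/\sqrt{\pi}]$ is always an integer multiple of $\sqrt{\pi}$ and the unique such multiple inside the half-open window is $n\sqrt{\pi}$, this surviving condition is precisely $[f_l/\sqrt{\pi}]=n$ --- the same event isolated on the left. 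Hence both sides reduce to the identical $\vec{\xi}$-integral and are equal, which proves the lemma.

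The step requiring the most care is the boundary bookkeeping: I must verify that the half-open window $[(n-\tfrac{1}{2})\sqrt{\pi},(n+\tfrac{1}{2})\sqrt{\pi})$ is consistent with the tie-breaking convention built into the rounding $[\cdot]$ that defines $R_{\sqrt{\pi}}$, so that the events $\{f_l \in \text{window}\}$ and $\{\sqrt{\pi}[f_l/\sqrt{\pi}] \in \text{window}\}$ describe the same set. Any discrepancy can only occur at the window endpoints, a measure-zero set that does not affect either integral for an absolutely continuous distribution $p(\vec{\xi})$ such as the Gaussian used throughout. Everything else follows directly from localizing the delta function, so no genuine analytic difficulty remains once the rounding conventions are aligned.
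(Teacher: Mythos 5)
Your proposal is correct and follows essentially the same route as the paper: both collapse the $\eta_l$-integral via the delta function and then show that the two resulting constraint sets on $\vec{\xi}$ --- $\{f_l\in[(n-\tfrac12)\sqrt{\pi},(n+\tfrac12)\sqrt{\pi})\}$ and $\{f_l-R_{\sqrt{\pi}}(f_l)\in[(n-\tfrac12)\sqrt{\pi},(n+\tfrac12)\sqrt{\pi})\}$ --- coincide, using that $f_l-R_{\sqrt{\pi}}(f_l)$ is the nearest integer multiple of $\sqrt{\pi}$ and that $n\sqrt{\pi}$ is the only such multiple in the half-open window. Your added remark on the tie-breaking convention at the window endpoints is a reasonable (measure-zero) refinement the paper leaves implicit.
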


\begin{proof}
Integrating against $d\xi_l$, the claimed equivalence in Eq. \eqref{eq:lemmaintegralequiv} is transformed to 
\begin{equation}
   \int_{V_1} d^{2n}\xi\, p(\vec{\xi})=\int_{V_2} d^{2n}\xi\, p(\vec{\xi}), 
\end{equation}
where $V_1=\{\vec{\xi}|f_l(\vec{\xi})\in[(n-1/2)\sqrt{\pi},(n+1/2)\sqrt{\pi})\}$, and $V_2=\{\vec{\xi}|f_l(\vec{\xi})-R_{\sqrt{\pi}}(f_l(\vec{\xi}))\in[(n-1/2)\sqrt{\pi},(n+1/2)\sqrt{\pi})\}$. It is sufficient to show that $V_1=V_2$. $V_1\subseteq V_2$ is obvious. On the other hand, if $f_l(\vec{\xi})-R_{\sqrt{\pi}}(f_l(\vec{\xi}))\in[(n-1/2)\sqrt{\pi},(n+1/2)\sqrt{\pi})$, since $f_l(\vec{\xi})-R_{\sqrt{\pi}}(f_l(\vec{\xi}))$ is an integer multiple of $\sqrt{\pi}$, we must have $f_l(\vec{\xi})-R_{\sqrt{\pi}}(f_l(\vec{\xi}))=n\sqrt{\pi}$. By definition $R_{\sqrt{\pi}}(f_l(\vec{\xi}))\in[-1/2\sqrt{\pi},1/2\sqrt{\pi})$, we have $f_l(\vec{\xi})=n\sqrt{\pi}+R_{\sqrt{\pi}}(f_l(\vec{\xi}))\in[(n-1/2)\sqrt{\pi},(n+1/2)\sqrt{\pi})$. So $V_2\subseteq V_1$, and $V_1=V_2$. Hence we proved Eq.\eqref{eq:lemmaintegralequiv}.
\end{proof}

We now investigate the cases of scheme II and \textcolor{black}{III}. In scheme II, if we only do the first round error correction, the final error distribution will be $\vec{\xi}_{\text{final}}=A_{\text{enc}}P_{A_1}^{\perp}\vec{\xi}$, c.f. Eq.~\eqref{eq:post-EC-error}. If we only focus on the position and momentum error of the logical modes, we subtract the corresponding rows and get $\vec{\xi}^{(1)}_{\text{final},l}=A_{2}P_{A_1}^{\perp}\vec{\xi}$. In scheme II, the final error after two rounds of error corrections is Eq.~\eqref{scheme2finalerror}. Again, focusing on the errors of logical modes, we have 
\begin{eqs}
   \vec{\xi}^{(2)}_{\text{final},l}&= A_{2}\left(P_{A_1}^{\perp}\vec{\xi}-(A_2P_{A_1}^{\perp})^T\left(A_2P_{A_1}^{\perp}A_2^T\right)^{-1}R_{\sqrt{\pi}}(A_2P_{A_1}^{\perp}\vec{\xi})\right)\\
   &=A_{2}P_{A_1}^{\perp}\vec{\xi}-R_{\sqrt{\pi}}(A_2P_{A_1}^{\perp}\vec{\xi}).
\end{eqs}

Comparing $\vec{\xi}^{(1)}_{\text{final},l},~\vec{\xi}^{(2)}_{\text{final},l}$ with Eq.~\eqref{eq:lemmaintegralequiv}, we see that $\vec{f}(\vec{\xi})=A_{2}P_{A_1}^{\perp}\vec{\xi}$. Recall that for the position or momentum error of $j$th logical mode, its correct rate can be written as
\begin{equation}
    \sum_n \int_{(2n-1/2)\sqrt{\pi}}^{(2n+1/2)\sqrt{\pi}}d\xi_{j,p/q} \int d^{2n}\xi p(\vec{\xi})\delta(\xi_{j,p/q}-\left(\vec{\xi}^{(1/2)}_{\text{final},l}\right)_{j,p/q}).
\end{equation}
So summing over all the even integers in Eq.~\eqref{eq:lemmaintegralequiv}  using either $\vec{\xi}^{(1)}_{\text{final},l}$ or $\vec{\xi}^{(2)}_{\text{final},l}$ produces the same result. This means that the correct rate of each logical mode does not change after the second round of logical GKP stabilizer error correction.

For scheme III, as discussed in Section \ref{subsubsec:scheme3-dec}, if we stop after only the first layer, the errors on the logical modes are Eq.~\eqref{eq:post-EC-error-III}
\begin{equation}
  \vec{\xi}^{(1)}_{\text{final},l}=A_2(\vec{\xi}-\vec{\xi}_*)=A_2\vec{\xi}-A_2A_3^T(A_3A_3^T)^{-1}R_{\sqrt{2\pi}}(A_{3}\vec{\xi}).
\end{equation}

After applying Layer 2, the final errors on the logical modes are in Eq.~\eqref{eq:finalerror3}. We repeat it below:
\begin{eqs}
&\vec{\xi}^{(2)}_{\text{final},l}=A_2(\vec{\xi}-\vec{\xi}_{*}^{\prime})\\
=&A_2\vec{\xi}-A_2A_3^T(A_3A_3^T)^{-1}\vec{z}-R_{\sqrt{\pi}}(A_{2}(\vec{\xi}-A_3^T(A_3A_3^T)^{-1}\vec{z})),
\end{eqs}
where $\vec{z}=R_{\sqrt{2\pi}}(A_3\vec{\xi})$. Now again comparing $\vec{\xi}^{(1)}_{\text{final},l},~\vec{\xi}^{(2)}_{\text{final},l}$ and Eq.\eqref{eq:lemmaintegralequiv}, it is easy to see that in the case of scheme III, $\vec{f}(\vec{\xi})=A_2\vec{\xi}-A_2A_3^T(A_3A_3^T)^{-1}R_{\sqrt{2\pi}}(A_3\vec{\xi})$. Applying the same argument as in scheme II , we see that for scheme III the second layer of logical GKP error correction will not change the correct rate of each logical mode either.

Though the discussion above focuses on the position or momentum error of a single mode, it is not difficult to see that the error rate on a multi-mode code subspace should be unchanged under the setting of this appendix.  

\section{Numerical Simulations}\label{appendix:numerics}

\subsection{Methods}

In this section, we discuss how to numerically simulate different error correction schemes using the Monte Carlo method. The method consists of three steps and is repeated \(M\) times.

\textbf{Initialization}: 
First, we initialize two vectors to store the displacement noise vector Eqs.~\eqref{eq:displacement_noise},~\eqref{eq:noise_vector} acting on codeword qubits, i.e., the vector  \((\vec{\xi}_{q}^{n}|\vec{\xi}_{p}^{n})\) consisting of two \(n\)-dimensional vectors $\vec{\xi}_q^n$ and $\vec{\xi}_p^n$. 
The displacement error acting on the codeword qubits is characterized by the \(2n\)-dimensional covariance matrix $\sigma$.

\textbf{Error correction}: The error correction procedure uses the update rules we discussed in Sec.~\ref{sec:EC_dec}, resulting in the residual noise vector $\vec{\xi}_{\text{final}}$.

After error correction, we decode the error-corrected quadrature and obtain the final residual noise vector $\vec{\xi}_{\text{final}}$ \eqref{eq:finalerror}.  
If $ \exists i\in \{1,2,...,k\}$, such that $R_{2\sqrt{\pi}}(\xi_{j,q})\geq \frac{\sqrt{\pi}}{2}$ or $R_{2\sqrt{\pi}}(\xi_{j,p})\geq \frac{\sqrt{\pi}}{2}$, then there is an error, and we update the error rate accordingly:
\begin{eqs}
\text{logical error rate}\leftarrow \text{logical error rate}+\frac{1}{M}~.
\end{eqs}

After completing the above $M$ times, we call the resulting value the empirical logical error rate $p_{\text{emp}}$.

For the error of Monte Carlo simulations, we regard the sampling process as a binomial distribution: the actual logical error rate is $p$ for each sample, where $p$ depends on the decoding schemes. The binomial distribution is
\begin{eqs}
\text{Prob}(\text{no logical error})&=\text{Prob}(x=0)=1-p\\
\text{Prob}(\text{logical error})&=\text{Prob}(x=1)=p,
\end{eqs}
where $x$ is the frequency of logical error. We define the empirical logical rate $p_{\text{emp}}$ as
\begin{eqs}
\frac{\sum_{i=1}^M x_i}{M}=\frac{N_{\text{error}}}{M}=p_{\text{emp}},
\end{eqs}
where $N_{\text{error}}$ is the number of samples with logical errors.

If we take $M$ independent and identical samples, the probability of getting an empirical logical error rate $p_{\text{emp}}$ is

\begin{eqs}
\binom{N_{\text{error}}}{M} p^{N_{\text{error}}} (1-p)^{M-N_{\text{error}}}.
\end{eqs}
The central limit theorem tells us that \textcolor{black}{the} mean value of $x$ will converge to $p$ and the variance of $x/M$ will converge to $p(1-p)/M$ when $M\rightarrow \infty$. Hence, we plot error bars of size $\pm \sqrt{p_{\text{emp}}(1-p_{\text{emp}})/M}$ for the numerical studies in the next section.
    
\subsection{More numerical results}\label{appendix:more_numerics}

Fig.~\ref{fig:multi_repetition} shows the numerical simulation of the scheme I and III for 5-qubit and 7-qubit repetition codes. Due to the limit of numerical simulation, we do not include the simulation of scheme III for $\sigma<0.18$. In these codes, under the assumption of our unbiased noise model, the final logical error is \textcolor{black}{mostly due to} the phase error of one bit in both schemes. So we expect the crossover of the logical error rates between two schemes will happen at roughly the same $\sigma$, regardless of the number of physical modes.

\begin{figure}[h]
    \centering
    \includegraphics[width=0.4\textwidth]{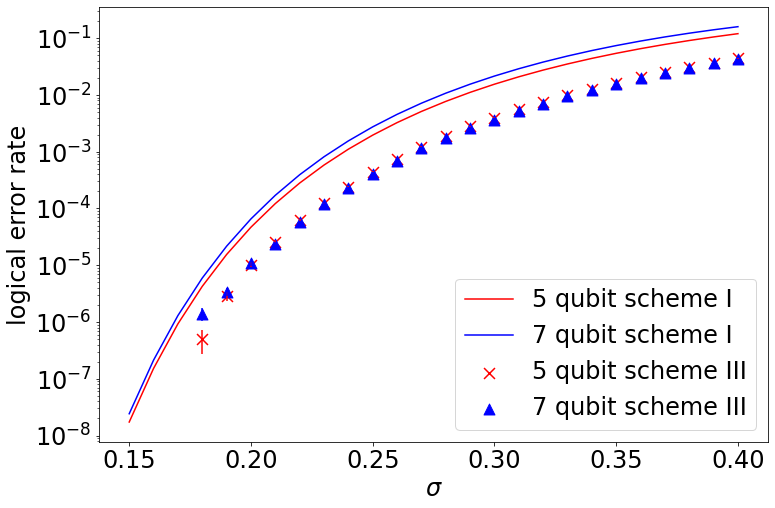}
    \caption{Comparison between scheme I and III for 5-qubit and 7-qubit repetition codes. The result of scheme I is calculated analytically, and the result of scheme III is obtained by Monte Carlo simulation.}
    \label{fig:multi_repetition}
\end{figure}

Fig.~\ref{fig:squeezed_scheme13} demonstrates the changing of the logical error rates with respect to the shape of the GKP lattice. The parameter $a$ is defined so the squeezing changes $\hat{p}\rightarrow \sqrt{\alpha} \hat{p}$, $\hat{q}\rightarrow \frac{1}{\sqrt{\alpha}} \hat{q}$. Correspondingly, the period in $\hat{q}$ is multiplied by \textcolor{black}{$\sqrt{\alpha}$} while the period in $\hat{p}$ is multiplied by \textcolor{black}{$1/\sqrt{\alpha}$}. This shows that the \textcolor{black}{distinguishability} of syndrome measurements can \textcolor{black}{be} improved by applying squeezing to GKP modes.

\begin{figure}[h]
    \centering
    \includegraphics[width=0.45\textwidth]{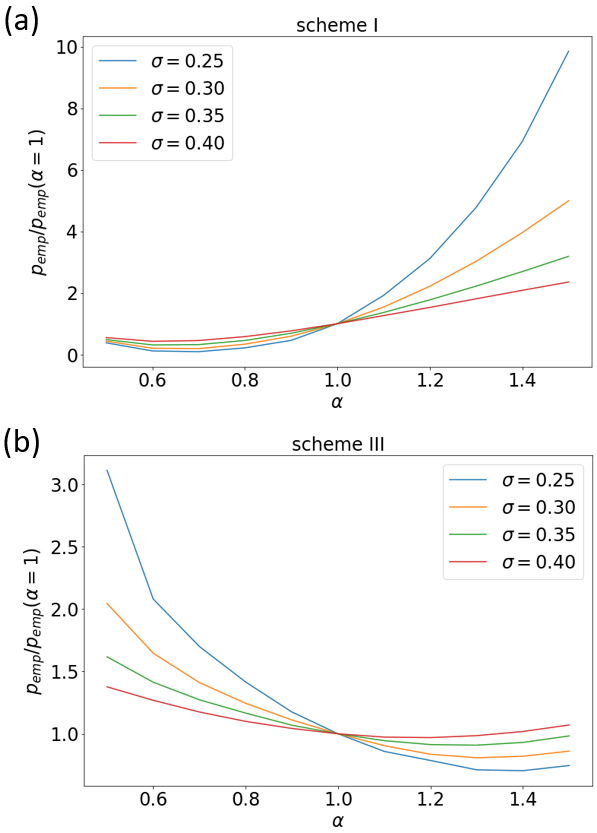}
    \caption{The performance of schemes I and III for rectangle GKP lattices. In part (a), we simultaneously squeeze all the modes $\hat{p}\rightarrow \sqrt{\alpha} \hat{p}$, $\hat{q}\rightarrow \frac{1}{\sqrt{\alpha}} \hat{q}$ for scheme I. In part (b), we only squeeze the auxiliary mode $\hat{p}\rightarrow \sqrt{\alpha} \hat{p}$, $\hat{q}\rightarrow \frac{1}{\sqrt{\alpha}} \hat{q}$. The vertical axis in both subplots reflects the ratio of the logical error rate to the logical error rate of the unsqueezed ($\alpha=1$) scenario.}
    \label{fig:squeezed_scheme13}
\end{figure}

\section{Qudit version of GKP-stabilizer codes}\label{appendix:qudit_version}

In this section, we demonstrate the qudit version of the GKP-stabilizer code. Here, we
define the qudit Pauli matrices to be

\begin{eqs}
X=\sum_{j=0}^{d-1} \ket{j+1}\bra{j},~ Z=\sum_{j=0}^{d-1} \omega^{j} \ket{j}\bra{j},
\end{eqs}
where $\omega=e^{i2\pi/d}$.
We define the CNOT gate to be

\begin{eqs}
\text{CNOT}_{1\rightarrow 2} \ket{x}\
\otimes \ket{y}= \ket{x}\otimes\ket{(x+y)\mod d}.
\end{eqs}

Here we use the two-mode GKP-repetition code \cite{Noh_many_oscillator} that encodes the logical information of a data qudit into a two-qudit system as an example. Let the quantum state of the data qubit to be $\ket{\psi}=\sum_{x=0}^{d-1} \ket{x}$ where $\ket{x}$ is the eigenstate of $Z$ with eigenvalue $\omega^{x}$.

The auxiliary qudit is initialized to a canonical qudit-GKP state

\begin{eqs}
\ket{\text{GKP}}_{\text{qudit}}=\frac{1}{\sqrt{d/r}}\sum_{m=0}^{d/r-1} \ket{rm}
\end{eqs}
which is stabilized by $X^r$ and $Z^r$. The second stabilizer implies an addition condition that $\frac{2\pi r^2}{d} \mod 2\pi =0$.

Similar to the regular two-mode canonical GKP-repetition code, the encoding circuit is a $\text{CNOT}_{1\rightarrow 2}$

\begin{eqs}
\text{CNOT}_{1\rightarrow 2} \ket{\psi}\otimes \ket{\text{GKP}}_{\text{qudit}}=\sum_{x=0}^{d-1} \sum_{m=0}^{d/r-1} \frac{\psi(x)}{\sqrt{d/r}}\ket{x}\ket{rm+x}.
\end{eqs}

Then we apply additive Pauli error $X_1^{a_1} Z_1^{c_1} X_2^{a_2} Z_2^{c_2}$. $a_1, c_1, a_2, c_2$ are independent and identical zero-mean random variables.

\begin{eqs}
&\ket{\Phi}=X_1^{a_1} Z_1^{c_1} X_2^{a_2} Z_2^{c_2}\sum_{x=0}^{d-1} \sum_{m=0}^{d/r-1} \frac{\psi(x)}{\sqrt{d/r}}\ket{x}\ket{rm+x}\\
=&\frac{1}{\sqrt{d/r}}\sum_{x=0}^{d-1} \sum_{m=0}^{d/r-1} e^{i\omega [(rm+x) c_2 + x c_1]}\psi(x)\ket{x+a_1} \ket{rm+x+a_2}
\end{eqs}

Then we apply the decoding circuit

\begin{eqs}
&\text{CNOT}_{1\rightarrow 2}^\dagger \ket{\Phi}\\
=& \Big(\sum_{x=0}^{d-1} e^{i \omega x (c_1 + c_2)} \psi(x)\ket{x+a_1} \Big)  \Big(\sum_{m=0}^{d/r-1} \frac{e^{i\omega rm c_2}}{\sqrt{d/r}} \ket{rm+a_2-a_1}\Big)\\
=&\Big(X_1^{a_1} Z_1^{c_1+c_2} \ket{\psi} \Big) \Big(X_2^{a_2-a_1} Z_2^{c_2} \ket{\text{GKP}}_{\text{qudit}}\Big).
\end{eqs}

Since the code distance of ancilla is $r$, if $|a_2-a_1|$ and $|c_2|$ are smaller than $r/2$, then we can extract the $a_2-a_1$ and $c_2$ by measuring the stabilizer of auxiliary qudit ($X_2^r$ and $Z_2^r$). Hence we can correct the $X$-error acting on data qudit by applying error correction $Z_1^{-c_2} X_1^{\frac{1}{2}(a_2-a_1)}$. By assuming $a_2-a_1$ and $c_2$ lie in the unambiguously distinguishable range $[-r/2, r/2]$, this error-correcting code can reduce the variance of $X$ error acting on data qudit by $50\% $ without amplifying the variance of $Z$-error.

\bibliography{biblo.bib}
\end{document}